\newtheorem{theorem}{Theorem}
\newtheorem{definition}{Definition}
\newtheorem{lemma}{Lemma}
\newtheorem{proposition}{Proposition}
\newtheorem{conjecture}{Conjecture}
\newtheorem{example}{Example}
\newtheorem{corollary}{Corollary}
\def\bcj{\begin{conjecture}}
	\def\ecj{\end{conjecture}}
\def\bcr{\begin{corollary}}
	\def\ecr{\end{corollary}}
\def\bd{\begin{definition}}
	\def\ed{\end{definition}}
\def\bea{\begin{eqnarray}}
	\def\eea{\end{eqnarray}}
\def\bem{\begin{enumerate}}
	\def\eem{\end{enumerate}}
\def\bex{\begin{example}}
	\def\eex{\end{example}}
\def\bim{\begin{itemize}}
	\def\eim{\end{itemize}}
\def\bl{\begin{lemma}}
	\def\el{\end{lemma}}
\def\bma{\begin{bmatrix}}
	\def\ema{\end{bmatrix}}
\def\bpf{\begin{proof}}
	\def\epf{\end{proof}}
\def\bpp{\begin{proposition}}
	\def\epp{\end{proposition}}
\def\bqu{\begin{question}}
	\def\equ{\end{question}}
\def\br{\begin{remark}}
	\def\er{\end{remark}}
\def\bt{\begin{theorem}}
	\def\et{\end{theorem}}
\def\squareforqed{\hbox{\rlap{$\sqcap$}$\sqcup$}}
\def\qed{\ifmmode\squareforqed\else{\unskip\nobreak\hfil
		\penalty50\hskip1em\null\nobreak\hfil\squareforqed
		\parfillskip=0pt\finalhyphendemerits=0\endgraf}\fi}
\def\endenv{\ifmmode\;\else{\unskip\nobreak\hfil
		\penalty50\hskip1em\null\nobreak\hfil\;
		\parfillskip=0pt\finalhyphendemerits=0\endgraf}\fi}
\newenvironment{proof}{\noindent \textbf{{Proof.~} }}{\qed}
\def\Dbar{\leavevmode\lower.6ex\hbox to 0pt
	{\hskip-.23ex\accent"16\hss}D}
\def\url@leostyle{%
	\@ifundefined{selectfont}{\def\UrlFont{\sf}}{\def\UrlFont{\small\ttfamily}}}
\def\bcj{\begin{conjecture}}
	\def\ecj{\end{conjecture}}
\def\bcr{\begin{corollary}}
	\def\ecr{\end{corollary}}
\def\bd{\begin{definition}}
	\def\ed{\end{definition}}
\def\bea{\begin{eqnarray}}
	\def\eea{\end{eqnarray}}
\def\bem{\begin{enumerate}}
	\def\eem{\end{enumerate}}
\def\bex{\begin{example}}
	\def\eex{\end{example}}
\def\bim{\begin{itemize}}
	\def\eim{\end{itemize}}
\def\bl{\begin{lemma}}
	\def\el{\end{lemma}}
\def\bpf{\begin{proof}}
	\def\epf{\end{proof}}
\def\bpp{\begin{proposition}}
	\def\epp{\end{proposition}}
\def\bqu{\begin{question}}
	\def\equ{\end{question}}
\def\br{\begin{remark}}
	\def\er{\end{remark}}
\def\bt{\begin{theorem}}
	\def\et{\end{theorem}}
\def\btb{\begin{tabular}}
	\def\etb{\end{tabular}}
	\newcommand{\nc}{\newcommand}
	\nc{\bbA}{\mathbb{A}} \nc{\bbB}{\mathbb{B}} \nc{\bbC}{\mathbb{C}}
	\nc{\bbD}{\mathbb{D}} \nc{\bbE}{\mathbb{E}} \nc{\bbF}{\mathbb{F}}
	\nc{\bbG}{\mathbb{G}} \nc{\bbH}{\mathbb{H}} \nc{\bbI}{\mathbb{I}}
	\nc{\bbJ}{\mathbb{J}} \nc{\bbK}{\mathbb{K}} \nc{\bbL}{\mathbb{L}}
	\nc{\bbM}{\mathbb{M}} \nc{\bbN}{\mathbb{N}} \nc{\bbO}{\mathbb{O}}
	\nc{\bbP}{\mathbb{P}} \nc{\bbQ}{\mathbb{Q}} \nc{\bbR}{\mathbb{R}}
	\nc{\bbS}{\mathbb{S}} \nc{\bbT}{\mathbb{T}} \nc{\bbU}{\mathbb{U}}
	\nc{\bbV}{\mathbb{V}} \nc{\bbW}{\mathbb{W}} \nc{\bbX}{\mathbb{X}}
	\nc{\bbZ}{\mathbb{Z}}
	\nc{\bA}{{\bf A}} \nc{\bB}{{\bf B}} \nc{\bC}{{\bf C}}
	\nc{\bD}{{\bf D}} \nc{\bE}{{\bf E}} \nc{\bF}{{\bf F}}
	\nc{\bG}{{\bf G}} \nc{\bH}{{\bf H}} \nc{\bI}{{\bf I}}
	\nc{\bJ}{{\bf J}} \nc{\bK}{{\bf K}} \nc{\bL}{{\bf L}}
	\nc{\bM}{{\bf M}} \nc{\bN}{{\bf N}} \nc{\bO}{{\bf O}}
	\nc{\bP}{{\bf P}} \nc{\bQ}{{\bf Q}} \nc{\bR}{{\bf R}}
	\nc{\bS}{{\bf S}} \nc{\bT}{{\bf T}} \nc{\bU}{{\bf U}}
	\nc{\bV}{{\bf V}} \nc{\bW}{{\bf W}} \nc{\bX}{{\bf X}}
	\nc{\ba}{{\bf a}} \nc{\be}{{\bf e}} \nc{\bu}{{\bf u}}
	\nc{\brr}{{\bf r}} \nc{\bx}{{\bf x}}
	\nc{\cA}{{\cal A}} \nc{\cB}{{\cal B}} \nc{\cC}{{\cal C}}
	\nc{\cD}{{\cal D}} \nc{\cE}{{\cal E}} \nc{\cF}{{\cal F}}
	\nc{\cG}{{\cal G}} \nc{\cH}{{\cal H}} \nc{\cI}{{\cal I}}
	\nc{\cJ}{{\cal J}} \nc{\cK}{{\cal K}} \nc{\cL}{{\cal L}}
	\nc{\cM}{{\cal M}} \nc{\cN}{{\cal N}} \nc{\cO}{{\cal O}}
	\nc{\cP}{{\cal P}} \nc{\cQ}{{\cal Q}} \nc{\cR}{{\cal R}}
	\nc{\cS}{{\cal S}} \nc{\cT}{{\cal T}} \nc{\cU}{{\cal U}}
	\nc{\cV}{{\cal V}} \nc{\cW}{{\cal W}} \nc{\cX}{{\cal X}}
	\nc{\cZ}{{\cal Z}}
	\nc{\hA}{{\hat{A}}} \nc{\hB}{{\hat{B}}} \nc{\hC}{{\hat{C}}}
	\nc{\hD}{{\hat{D}}} \nc{\hE}{{\hat{E}}} \nc{\hF}{{\hat{F}}}
	\nc{\hG}{{\hat{G}}} \nc{\hH}{{\hat{H}}} \nc{\hI}{{\hat{I}}}
	\nc{\hJ}{{\hat{J}}} \nc{\hK}{{\hat{K}}} \nc{\hL}{{\hat{L}}}
	\nc{\hM}{{\hat{M}}} \nc{\hN}{{\hat{N}}} \nc{\hO}{{\hat{O}}}
	\nc{\hP}{{\hat{P}}} \nc{\hR}{{\hat{R}}} \nc{\hS}{{\hat{S}}}
	\nc{\hT}{{\hat{T}}} \nc{\hU}{{\hat{U}}} \nc{\hV}{{\hat{V}}}
	\nc{\hW}{{\hat{W}}} \nc{\hX}{{\hat{X}}} \nc{\hZ}{{\hat{Z}}}
	\nc{\hn}{{\hat{n}}}
	\def\dim{\mathop{\rm Dim}}
	\def\min{\mathop{\rm min}}
	\newcommand{\bra}[1]{\langle#1|}
	\newcommand{\ket}[1]{|#1\rangle}
	\newcommand{\ketbra}[2]{|#1\rangle\!\langle#2|}
	\def \qed {\hfill \vrule height7pt width 7pt depth 0pt}
	\newcounter{lastnote}
\begin{document}
		\title{Almost all even-particle pure states are determined by their half-body marginals}
		
		\author{Wanchen Zhang}
		\affiliation{School of Mathematical Sciences,
			University of Science and Technology of China, Hefei, 230026,  China}
		\affiliation{Hefei National Laboratory, University of Science and Technology of China, Hefei, 230088, China}

		\author{Fei Shi}
		\affiliation{QICI Quantum Information and Computation Initiative, Department of Computer Science,
			The University of Hong Kong, Pokfulam Road, Hong Kong}	
		
		\author{Xiande Zhang}
		\email[]{Corresponding author: drzhangx@ustc.edu.cn}
		\affiliation{School of Mathematical Sciences,
			University of Science and Technology of China, Hefei, 230026,  China}
		\affiliation{Hefei National Laboratory, University of Science and Technology of China, Hefei, 230088, China}

		\begin{abstract}
			Determining whether the original global state is uniquely determined by its local marginals is a prerequisite for some efficient tools for characterizing quantum states. 
			In this paper, we show that almost all generic pure states
			of even $N$-particle with equal local dimension are \emph{uniquely determined among all other pure states} (UDP) by four of  their half-body marginals.
			Furthermore, we give a graphical description of the marginals  for determining  genuinely multipartite entangled states, which leads to several lower bounds on the number of required marginals.
		    Finally, we present a construction of $N$-qudit states obtained from certain combinatorial structures that cannot be UDP by its $k$-body marginals for some $k\geq \lfloor N/2\rfloor$.
		\end{abstract}
		\maketitle
		\vspace{-0.5cm}

		\section{Introduction}\label{sec:int}
		Understanding the property of multi-particle systems with only a few particles is one of the central problems in quantum physics. For a given multi-particle quantum state, calculating its \emph{reduced density matrices} (RDMs), which we call marginal in the following, is straightforward. In the opposite direction, a natural mathematical problem arises: For given marginals of some subsystems, are they compatible with each other in the sense that they can arise from a global state? The task to determine all compatible tuples of marginals and to describe this set in an elegant way is called \emph{quantum marginal problem} (QMP) \cite{Klyachko_2006,schilling2015quantum,haapasalo2021quantum,Schilling2017Quantum}.
		The QMP has a long history of research and has played an important role in quantum chemistry \cite{QMP,RevModPhys.35.668}. An improtant question under this problem is: For a given set of marginals, whether the original global state the only state having the given set of marginals. Recently, there has been increasing interest in quantum
		state tomography via marginals, see for example \cite{PhysRevLett.118.020401,PhysRevLett.124.100401}.
		A fundamental problem in adopting tomography is to determine what original states can be uniquely determined by its marginals.
		
		Many researchers have considered this question over the past two decades. Linden $\emph{et al.}$ showed that almost all pure states of three-qubit are  \emph{uniquely determined among arbitrary states (mixed or pure)} (UDA) by their 2-body marginals, and the only states without this property are equivalent under local rotations to
		states of the form $a\ket{000}+b\ket{111}$ \cite{PhysRevLett.89.207901}. Actually, an $N$-qubit pure state is undetermined by its marginals among arbitrary states if and only if it is local unitary equivalent to a generalized $N$-qubit \emph{Greenberger-Horne-Zeilinger} (GHZ) state \cite{PhysRevLett.100.050501,PhysRevA.79.032326}. Later, Jones and Linden showed that almost all pure states of $N$-qudit are UDA by its $\lceil N/2+1 \rceil$-body marginals, and the set of all $\lfloor N/2 \rfloor$-body marginals is an insufficient description of an $N$-qudit pure quantum state among arbitrary states \cite{PhysRevA.71.012324}. Recently, Huang $\emph{et al.}$ proved that
		almost all odd $N$-qudit pure states are UDA by its $(N+1)/2$-body marginals \cite{huang2018quantum}, thus completing the determination of threshold on the number of bodies for the UDA problem.
		
		The QMP can be viewed as a quantum analog of  \emph{Ulam's reconstruction conjecture}, which arises in graph theory \cite{Huber_2018}. The conjecture can be formulated as follows: given a complete set of single-vertex-deleted subgraphs, is the joint graph (without vertex labels) uniquely determined? Although Bollob\'{a}s proved that almost all graphs can be uniquely determined by three vertex deleted subgraphs \cite{Almostgraphthree} in 1990, this remains one of the outstanding open questions in graph theory. In particular, reconstructing the original graph with fewer vertex-deleted subgraphs  is important in \emph{Ulam's reconstruction problem}.
		
		Similarly in QMP, researchers consider reconstructing the original quantum state with as few marginals as possible. However, results on this theme were focused on the set of pure states due to complexity.  Di\`{o}si proved that almost all three-particle pure states, which do not require the same local dimensions, are \emph{uniquely determined among all other pure states} (UDP) by just two 2-body marginals \cite{PhysRevA.70.010302}.
		This result can be extended to that almost all $N$-particle pure states can be UDP by two intersecting marginals \cite{2012Comment}.
		Recently, Wyderka $\emph{et al.}$ showed that almost all four-qudit pure states are UDP by three specified 2-body marginals \cite{PhysRevA.96.010102}, and they extended this result to that almost all $N$-qudit pure states are UDP by three specified $(N-2)$-body marginals for any $N\ge 4$. 
		The word ``almost all'' implies two points, one is that states are drawn randomly according to the Haar measure, which has full Schmidt rank and pairwise distinct Schmidt coefficients \cite{PhysRevA.96.010102}, and the other is that all states except some parameter-specific ones, which may bring counterexamples \cite{PhysRevLett.89.207901}. For example, we know that neither the generalized GHZ states nor their local unitary equivalence classes can be UDP by their marginals \cite{PhysRevLett.100.050501}.		

		In this paper, we explore more results on this direction.
\begin{itemize}
  \item Our first contribution is to show that for any even $N\ge 4$, almost all $N$-qudit generic pure states are UDP by two specified pairs of half-body marginals. This result is interesting in two aspects.  On the one hand, for even $N$, $N/2$ is not sufficient for UDA \cite{PhysRevA.71.012324}, but it is sufficient for UDP based on our result. On the other hand, using half-body marginals, that is $N/2$-body, might be the best-known result for the general UDP problem (comparing with $(N/2+1)$-body in \cite{2012Comment}); further, only four specified half-body marginals is proved to be enough.
  \item Our second contribution is, for genuinely multipartite entangled (GME) states, we establish a necessary condition for the set of marginals to determine GME states by using hypergraphs and giving several lower bounds on the number of marginals required.
  \item Finally, we investigate states that can not be UDP. Our third contribution is to present a construction of $N$-qudit states obtained from orthogonal arrays (OAs) or packing arrays (PAs) that cannot be UDP by its $k$-body marginals for some $k\geq \lfloor N/2\rfloor$.
\end{itemize} 

		The rest of this paper is organized as follows. In Sec.~\ref{sec:pre}, we introduce preliminary knowledge. Sections~\ref{sec:qubit}-\ref{sec:counterexample} are devoted to obtaining our three results listed above one in a section. Finally, we conclude in Sec.~\ref{sec:con}.

		\section{Preliminaries}\label{sec:pre}
		In this section, we introduce the preliminary knowledge and facts.
		For a given set of indices $\mathcal{J}=\{j_1, j_2,\ldots, j_k\}$, let $\mathcal{J}_C$ denote the complement of  $\mathcal{J}$ in $\mathcal{I}=[N]:=\{1, 2, \ldots, N\}$, and let $P(\mathcal{I})$ denote the power set of $\mathcal{I}$.
		
		
		\begin{definition}
			Given an $N$-particle quantum state $\rho$ of parties $\mathcal{I}$ and a $k$-subset $\mathcal{J}\subset \mathcal{I}$, its $k$-body marginal of parties $\mathcal{J}$ is defined as
			\begin{equation}
				\rho_\mathcal{J}\triangleq \text{Tr}_{\mathcal{J}_C}(\rho),
			\end{equation}
			where the trace is a partial trace over parties $\mathcal{J}_C=\mathcal{I} \setminus \mathcal{J}$.
		\end{definition}
		
		Given a family $\mathcal{F}\subset P(\mathcal{I})$, we denote
		\begin{equation}
			\mathcal{D}_{\mathcal{F}}(\rho):=\{\rho_{\mathcal{J}}: \mathcal{J}\subseteq \mathcal{F}\},
		\end{equation}
		and call this collection of quantum marginals as a \emph{deck by $\mathcal{F}$}.
		If $\mathcal{F}$ consists of only $k$-subsets, that is, the deck contains only $k$-body marginals, we say it is a $k$-deck.
		Particularly,
		a $k$-deck is called complete if it contains all the $\binom{N}{k}$ $k$-body marginals, and denoted by $\mathcal{D}_k(\rho)$. When $\rho=\ketbra{\psi}{\psi}$, which is a pure state, we also use the notation $\mathcal{D}_k(\ket{\psi})$ for $\mathcal{D}_k(\rho)$.

		\begin{definition}\label{def-udp}
		 	A pure state $\ket{\psi}$   of parties $\mathcal{I}$ is called
\begin{itemize}
  \item \emph{uniquely determined by $\mathcal{F}$ among arbitrary states} ($\mathcal{F}$-UDA)
			if there exists no other state $\rho'$  of parties $\mathcal{I}$ satisfying  $\mathcal{D}_{\mathcal{F}}(\ket{\psi})=\mathcal{D}_{\mathcal{F}}(\rho')$.
  \item 	\emph{uniquely determined by $\mathcal{F}$ among pure states} ($\mathcal{F}$-UDP)
			if there exists no other pure state $\ket{\psi'}$  of parties $\mathcal{I}$ satisfying  $\mathcal{D}_{\mathcal{F}}(\ket{\psi})=\mathcal{D}_{\mathcal{F}}(\ket{\psi'})$.
\end{itemize}
			 Further, if each subset in $\mathcal{F}$ has size at most $k$, then we say $\ket{\psi}$ is $k$-UDA or $k$-UDP.
		\end{definition}
		
		Using Definition~\ref{def-udp}, Di\`{o}si's result indicates that almost all three-particle pure states are $\mathcal{F}$-UDP if $\mathcal{F}$ consists of any two subsets of size two \cite{PhysRevA.70.010302}; almost all $N$-particle pure states are $\mathcal{F}$-UDP if $\mathcal{F}$ consists of any two intersecting subsets $P_1, P_2$ with $P_1\cup P_2= \mathcal{I}$  \cite{2012Comment}; almost all four-qudit pure states are $\mathcal{F}$-UDP if $\mathcal{F}$ consists of $AB$, $BC$ and $CD$ \cite{PhysRevA.96.010102}.  When considering the smallest $k$ such that $\ket{\psi}$ is $k$-UDP, for almost all pure states, we conclude that $k\leq \lceil N/2 \rceil+1$ by \cite{2012Comment}, and $k\leq 2=4/2$ for four-qudit pure states by \cite{PhysRevA.96.010102}. The first contribution in our paper indeed shows that the smallest $k$ such that almost all $\ket{\psi}$ is $k$-UDP satisfies $k\leq N/2 $ when $N$ is even, which improves the results in \cite{2012Comment} and generalizes the result in \cite{PhysRevA.96.010102}. Combining the result in \cite{PhysRevA.71.012324}, we conclude that, for even $N$, almost all $N$-qudit pure states are not $N/2$-UDA, but are $N/2$-UDP.
		

		
		
		Now, we give the definition of generic states introduced in \cite{PhysRevA.96.010102}, which can be viewed as random states selected according to the Haar measure. For even $N$, consider a pure state $\ket{\psi}\in\mathcal{H}_{1}\otimes \cdots\otimes \mathcal{H}_{N}\cong (\mathbb{C}^d)^{\otimes N}$. Using the Schmidt decomposition along the bipartition $A_{1}\ldots A_{{N/2}}|A_{{N/2+1}}\ldots A_{{N}}$, we can write the state as
		\begin{equation}\label{generic}
			\ket{\psi}=\sum_{i=1}^{d^{N/2}}\sqrt{\lambda_i}\ket{i}_{A_{1}\ldots A_{{N/2}}}\otimes\ket{i}_{A_{{N/2+1}}\ldots A_{{N}}}
		\end{equation}
		where $\sum_{i=1}^{d^{N/2}}\lambda_i=1$.  If the state has full Schmidt rank,
		i.e., $\lambda_i\neq0$ for all $i$, then the two sets of states $\{\ket{i}_{A_{1}\ldots A_{{N/2}}}\}_{i=1}^{d^{N/2}}$ and $\{\ket{i}_{A_{{N/2+1}}\ldots A_{{N}}}\}_{i=1}^{d^{N/2}}$ are
		orthonormal bases of the composite Hilbert spaces $\mathcal{H}_{A_1}\otimes\cdots\otimes\mathcal{H}_{A_{N/2}}$ and $\mathcal{H}_{A_{N/2+1}}\otimes\cdots\otimes\mathcal{H}_{A_N}$, respectively.

		\begin{definition}\cite{PhysRevA.96.010102}
			A generic even N-particle pure state is a
			state $\ket{\psi}\in\mathcal{H}_{1}\otimes \cdots\otimes \mathcal{H}_{N}\cong (\mathbb{C}^d)^{\otimes N}$ drawn randomly according to the Haar measure. Writing such a state as in Eq. (\ref{generic}), the Schmidt bases and the set of Schmidt coefficients are independent of each other. The distribution
			of the Schmidt coefficients is given by \cite{LLOYD1988186,AJScott_2003,PhysRevA.96.010102}
			\begin{equation}
				\begin{aligned}
					&P(\lambda_1, \ldots, \lambda_{d^{N/2}})d\lambda_1\ldots\lambda_{d^{N/2}}=\\
					&N\delta(1-\sum_{i=1}^{d^{N/2}}\lambda_i)\prod\limits_{1\le i < j\le d^{N/2}} (\lambda_i-\lambda_j)^2d\lambda_i\ldots d\lambda_{d^{N/2}}
				\end{aligned}
			\end{equation}
			and the Schmidt bases are distributed according to the Haar measure of unitary operators on the smaller Hilbert spaces.
		\end{definition}
		
		Mathematically, a generic pure state means that when this state is formulated in the form of Eq. (\ref{generic}), it has full Schmidt rank and pairwise distinct Schmidt coefficients.
		In the next section, we will show that for even $N$, almost all $N$-particle generic pure states can be UDP by four of its half-body marginals.

		\section{Almost all even-particle pure state are determined by their half-body marginals}\label{sec:qubit}
		In this section, we show that for all even $N\geq 4$, almost all $N$-qudit pure states are determined by their half-body marginals, that is, almost all $N$-qudit pure states are $N/2$-UDP.

		When $N\equiv0 \mod 4$, we can apply  \cite[Theorem~2]{PhysRevA.96.010102} to obtain the result that almost all $N$-qudit pure states are determined by three specific half-body marginals, i.e., $N/2$-UDP. Here, we only need to divide the $N$ particles into four parts equally and recompose each part as a particle. 
		
		When $N\equiv2 \mod 4$, \cite[Theorem~2]{PhysRevA.96.010102} is not applicable. However, by \cite{2012Comment}, we know that two intersecting $(N/2+1)$-body marginals can determine the original pure state, that is, almost all $N$-qudit pure states are $(N/2+1)$-UDP in this case. Motivated by the proof of \cite[Theorem~1]{PhysRevA.96.010102}, we can improve this result and show that almost all $N$-qudit pure states are $N/2$-UDP by four of its half-body marginals.

We state our main result in this section as follows. Denote $N=2n$ for convenience.  For a subset of $[N]$, we also consider it as a set of particles, that is, we view $A\subset[N]$ as $\{\mathcal{H}_i:i\in A\}$. We use $AB$ to denote $A \cup B$ when $A \cap B = \emptyset$.

		\begin{theorem}\label{theorem111}
			Almost all $2n$-qudit pure states $\ket{\psi}$ are UDP by its $\mathcal{D}_n(\ket{\psi})$ for $n\ge 2$. In fact, any four $n$-body marginals $\rho_{AB}, \rho_{CD}$ and $\rho_{AC}, \rho_{BD}$ satisfying that $(AB|CD)$ and  $(AC|BD)$ are both bipartitions  of the set of  $2n$ particles will suffice (see Fig.~\ref{figeven}).
		\begin{figure*}[htbp]
			\centering
			\includegraphics[width=0.8 \textwidth]{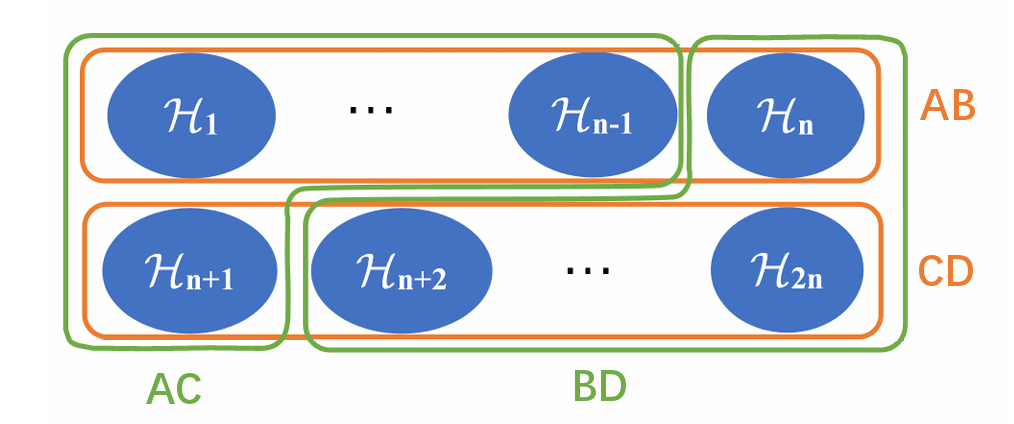}
			\caption{ Illustration of the worst case of $2n$-qudit in Theorem~\ref{theorem111}: the case where these two pairs of bipartitions yield the least number of equations to determine the state, which happens when these two bipartitions are the most unbalanced.}\label{figeven}
		\end{figure*}		
		\end{theorem}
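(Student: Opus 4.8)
The plan is to follow the strategy behind \cite[Theorem~1]{PhysRevA.96.010102}: first use the bipartition pair $(AB|CD)$ to cut a competing pure state down to a torus of relative phases, then turn the two remaining marginal constraints into a \emph{linear} system in those phases, and finally show that this system has only the trivial solution for generic Schmidt data. Write $N=2n$. Since $|AB|=|CD|=|AC|=|BD|=n$ one has $|A|=|D|$ and $|B|=|C|$; set $d_A=d^{|A|}=d_D$ and $d_B=d^{|B|}=d_C$, so $d_Ad_B=d^{n}$, and assume all four blocks non-empty (otherwise the four marginals manifestly cannot be UDP).

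First I would fix a generic $\ket{\psi}$. Along any fixed bipartition the pure states of non-full Schmidt rank or with a repeated Schmidt coefficient form a proper algebraic --- hence measure-zero --- subset, and there are finitely many bipartitions, so almost every $\ket{\psi}$ has full Schmidt rank $d^{n}$ and a simple Schmidt spectrum along \emph{both} $(AB|CD)$ and $(AC|BD)$. Restrict to such a $\ket{\psi}$, written as in Eq.~(\ref{generic}) along $(AB|CD)$, and suppose a pure state $\ket{\psi'}$ shares the four marginals $\rho_{AB},\rho_{CD},\rho_{AC},\rho_{BD}$. From $\rho^{\psi'}_{AB}=\rho^{\psi}_{AB}$ the Schmidt decomposition of $\ket{\psi'}$ along $(AB|CD)$ reads $\sum_i\sqrt{\lambda_i}\ket{i}_{AB}\ket{g_i}_{CD}$; then $\rho^{\psi'}_{CD}=\rho^{\psi}_{CD}=\sum_i\lambda_i\proj{i}_{CD}$ together with the $\lambda_i$ being distinct forces $\ket{g_i}=e^{\mathrm{i}\theta_i}\ket{i}_{CD}$, so $\ket{\psi'}=\sum_{i=1}^{d^{n}}\sqrt{\lambda_i}\,e^{\mathrm{i}\theta_i}\ket{i}_{AB}\ket{i}_{CD}$, with $\theta_1=0$ without loss of generality.

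Next I would encode $\ket{i}_{AB}$ by its $d_A\times d_B$ coordinate matrix $M_i$ and $\ket{i}_{CD}$ by its $d_C\times d_D$ matrix $N_i$, so that $\{M_i\}$ and $\{N_i\}$ are orthonormal bases of the respective matrix spaces. A direct computation of the two remaining marginals of $\ket{\psi'}$ gives
\[
\rho^{\psi'}_{AC}=\sum_{i,j}\sqrt{\lambda_i\lambda_j}\,e^{\mathrm{i}(\theta_i-\theta_j)}(M_iM_j^{\dagger})\otimes(N_iN_j^{\dagger}),\qquad \rho^{\psi'}_{BD}=\sum_{i,j}\sqrt{\lambda_i\lambda_j}\,e^{\mathrm{i}(\theta_i-\theta_j)}\,\overline{M_i^{\dagger}M_j}\otimes\overline{N_i^{\dagger}N_j},
\]
with $\ket{\psi}$ the case $\theta\equiv0$. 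Hence $\rho^{\psi'}_{AC}=\rho^{\psi}_{AC}$ and $\rho^{\psi'}_{BD}=\rho^{\psi}_{BD}$ are \emph{exactly} the homogeneous linear system, in the complex unknowns $w_{ij}:=\sqrt{\lambda_i\lambda_j}\,(e^{\mathrm{i}(\theta_i-\theta_j)}-1)$ with $i\neq j$,
\[
\mathcal{L}(w):=\Bigl(\,\sum_{i\neq j}w_{ij}(M_iM_j^{\dagger})\otimes(N_iN_j^{\dagger}),\ \sum_{i\neq j}w_{ij}\,\overline{M_i^{\dagger}M_j}\otimes\overline{N_i^{\dagger}N_j}\,\Bigr)=(0,0).
\]
These equations are genuinely linear in $w$, not a linearization, so if the map $\mathcal{L}$ --- from $\mathbb{C}^{d^{2n}-d^{n}}$ into a pair of operator spaces of total complex dimension $2d^{2n}$ --- is injective, then $w=0$; since the $\lambda_i$ are nonzero this forces all $e^{\mathrm{i}\theta_i}$ to agree, i.e.\ $\ket{\psi'}=\ket{\psi}$ up to a global phase, which is the assertion. (Symmetrically $\rho_{AC},\rho_{BD}$ give $\ket{\psi'}=\sum_k\sqrt{\mu_k}e^{\mathrm{i}\eta_k}\ket{k}_{AC}\ket{k}_{BD}$, so one could instead intersect the two phase tori, but the linear formulation makes a ``no far-away solution'' argument unnecessary.)

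It remains to prove that $\mathcal{L}$ is injective for generic $(\lambda_i;M_i;N_i)$, and this I expect to be the main obstacle. Injectivity is the non-vanishing of a maximal minor of $\mathcal{L}$, a polynomial in the entries of the Haar-random unitaries carrying $\{M_i\},\{N_i\}$ and in $\sqrt{\lambda_i}$; hence it suffices to exhibit one admissible instance --- full Schmidt rank, distinct coefficients --- at which $\mathcal{L}$ is injective, for then the failure locus is Zariski-closed of Haar-measure zero and ``almost all'' follows. The delicate point is the choice of witness: the most symmetric candidates, such as Weyl--Heisenberg bases, are degenerate (there $M_iM_j^{\dagger}$ depends only on $i-j$, collapsing the rank of $\mathcal{L}$), and it is for the strongly unbalanced block sizes --- e.g.\ $d_B=d_C=d$, the worst case of Fig.~\ref{figeven} --- that the two crossing marginals $\rho_{AC}$ and $\rho_{BD}$ must be used together, since then the operators entering one of them alone span too small a space while the two halves of $\mathcal{L}$ are complementary. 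I would therefore put one pair of Schmidt bases in a convenient normal form and keep the other generic, reduce the injectivity of $\mathcal{L}$ to the nonsingularity of an explicit block matrix governed by the generic spectrum $\{\lambda_i\}$, and verify this uniformly over $d$, over $n\ge2$, and over all admissible $(d_A,d_B)$. The argument being uniform in $n$, it covers all even $N\ge4$; for $N\equiv0\pmod4$ one may alternatively regroup the parties into four equal blocks and invoke \cite[Theorem~2]{PhysRevA.96.010102}.
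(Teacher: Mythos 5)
Your reduction matches the paper's: generically the Schmidt spectrum along $(AB|CD)$ is nondegenerate, so any competitor sharing $\rho_{AB},\rho_{CD}$ differs only by phases, and the constraints from $\rho_{AC},\rho_{BD}$ become a homogeneous linear system in the quantities $w_{ij}=\sqrt{\lambda_i\lambda_j}(e^{\mathrm{i}(\theta_i-\theta_j)}-1)$ (the paper's $\gamma^{(ij)}$). But the heart of the theorem is precisely the step you defer: showing that for generic Schmidt data this system has only the trivial solution. You state it as ``the main obstacle,'' propose to find a single witness instance and invoke Zariski-closedness, note that the natural symmetric candidates (Weyl--Heisenberg) fail, and then describe — but do not carry out — a reduction to the nonsingularity of an unspecified block matrix, uniformly in $d$, $n$ and the split $(d_A,d_B)$. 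No witness is exhibited and no rank bound is proved, so the proposal does not establish the theorem. The paper closes exactly this gap differently: its Lemma~1 shows that for Haar-generic Schmidt bases the entries of the partially traced operators $Q^{(ij)},P^{(ij)},L^{(ij)},M^{(ij)}$ satisfy no linear relations beyond the trace identities, and then an explicit count shows the two crossing marginals yield $\binom{d^{|A|}}{2}(d^{2|C|}-1)+\binom{d^{|B|}}{2}(d^{2|D|}-1)$ independent equations, which, minimized over the split sizes (the minimum occurring at the most unbalanced case $|A|\in\{1,n-1\}$ of Fig.~\ref{figeven}), still exceeds the $\tfrac12 d^{n}(d^{n}-1)$ unknowns, forcing $\gamma^{(ij)}=0$. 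In other words, the paper works directly with the algebraic independence of generic basis coefficients rather than constructing an explicit example, which is what lets the argument go through uniformly.

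A further caution about your formulation: you ask for injectivity of $\mathcal{L}$ on all of $\mathbb{C}^{d^{2n}-d^{n}}$, i.e.\ treating $w_{ij}$ and $w_{ji}$ as independent, whereas the actual unknowns satisfy $w_{ji}=\overline{w_{ij}}$ and only the restricted (real-linear) injectivity is needed. That stronger claim may be unprovable by the counting available in the unbalanced case — e.g.\ for $d=2$, $n=3$ the paper extracts $33$ independent equations against $28$ constrained unknowns, far short of the $56$ unconstrained ones — so a witness-based proof of your stronger statement could fail even though the theorem is true. If you pursue your route, restrict to the Hermitian-symmetric subspace (as the paper implicitly does by pairing $\gamma^{(ij)}$ with $\overline{\gamma^{(ij)}}$) before arguing full rank.
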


The proof of Theorem~\ref{theorem111} is deferred to Section~\ref{sec-six}. The idea of the proof follows from the one in \cite[Theorem~1]{PhysRevA.96.010102}, 
where the authors proved that three specific two-body marginals $\rho_{AB}, \rho_{CD}$ and $\rho_{BD}$ are enough for almost all four-qubit pure states to be UDP among pure states of four particles $A$, $B$, $C$ and $D$.
The rough idea of their argument is that one first looks at the
bipartition $\rho_{AB}, \rho_{CD}$, and generically the Schmidt spectrum is nondegenerate;
This implies that the only remaining freedom can be parameterized by
some phases; One then uses the other marginal $\rho_{BD}$ and shows that there
are sufficiently many equations to constrain the phases. However, only one marginal besides the bipartition is not enough, one needs to discover more constraints on the phases, see Eqs. (28)-(32) in \cite{PhysRevA.96.010102}, which are not easy to be generalized to $N$ qudits.

To get a similar result for general even $N$, we use another bipartition  $\rho_{AC}, \rho_{BD}$ in Theorem~\ref{theorem111} to obtain enough independent equations. The result is weaker for the case $N=4$, but it is executable for the proof of general even $N$. For easy reading, we first deal with the case of six qubits as an example in Section~\ref{sec-six} and then prove  Theorem~\ref{theorem111}.
 Before moving to Section~\ref{sec-six}, we need to introduce several notations used in our proof, most of which follow from those in  \cite{PhysRevA.96.010102}.

%
Suppose that $\ket{\psi}$ is a generic pure state of $N=2n$ particles of internal dimension $d$, then the Schmidt decomposition along the bipartition $(AB|CD)$ gives
        \begin{equation}\label{eqpsi2N}
        	\ket{\psi}= \sum_{i=1}^{d^n}\sqrt{\lambda_i}\ket{i}_{AB}\ket{i}_{CD},
        \end{equation}
  where $\sqrt{\lambda_i}$ are the strictly decreasingly ordered nonzero Schmidt coefficients.   It is clear that  $\rho_{AB}=\sum_{i=1}^{d^n}{\lambda_i}\ket{i}\bra{i}_{AB}$ and $\rho_{CD}=\sum_{i=1}^{d^n}{\lambda_i}\ket{i}\bra{i}_{CD}$. If another pure state $\ket{\phi}$ shares the same half-body marginals $\rho_{AB}$ and $\rho_{CD}$ as $\ket{\psi}$, then it must be of the form
	
		\begin{equation}\label{eqphi2N}
     	\ket{\phi}= \sum_{i=1}^{d^n}e^{\sqrt{-1}\varphi_i}\sqrt{\lambda_i}\ket{i}_{AB}\ket{i}_{CD}.
         \end{equation}
		

		We want to show that all $\varphi_i$ are the same, thus $\ket{\psi}=\ket{\phi}$. In Theorem~\ref{theorem111},
		 we demand that another bipartition of marginals $\rho_{AC}, \rho_{BD}$ coincide, i.e. $\text{Tr}_{BD}(\ket{\psi}\bra{\psi})=\text{Tr}_{BD}(\ket{\phi}\bra{\phi})$ and $\text{Tr}_{AC}(\ket{\psi}\bra{\psi})=\text{Tr}_{AC}(\ket{\phi}\bra{\phi})$, thus for each $\mathcal{J}\in \{BD, AC\}$,
\begin{equation}\label{rhobd}
			\begin{aligned}
				\sum_{i,j=1}^{d^n}\sqrt{\lambda_i\lambda_j}\text{Tr}_{\mathcal{J}}(\ket{i}\bra{j}_{AB}\otimes\ket{i}\bra{j}_{CD})
				&= \text{Tr}_{\mathcal{J}}(\ket{\psi}\bra{\psi}) \\
				&= \text{Tr}_{\mathcal{J}}(\ket{\phi}\bra{\phi}) \\
				&= \sum_{i,j=1}^{d^n}e^{\sqrt{-1}(\varphi_i-\varphi_j)}\sqrt{\lambda_i\lambda_j}\text{Tr}_{\mathcal{J}}(\ket{i}\bra{j}_{AB}\otimes\ket{i}\bra{j}_{CD}).
			\end{aligned}
		\end{equation}
Denote
\begin{equation}\label{eqgamma}
  \gamma^{(ij)}:=(1-e^{\sqrt{-1}(\varphi_i-\varphi_j)})\sqrt{\lambda_i\lambda_j}, \text{ for } i,j=1,\ldots, d^n.
\end{equation} Then we have
\begin{equation}\label{eqsum}
         \sum_{i,j=1}^{d^n}\gamma^{(ij)}\text{Tr}_{\mathcal{J}}(\ket{i}\bra{j}_{AB}\otimes\ket{i}\bra{j}_{CD})=0.
        \end{equation}
Note that showing that all $\varphi_i$ are the same is equivalent to showing that $\gamma^{(ij)}$ has only zero solutions. Since $\gamma^{(ii)}=0$ and $\gamma^{(ij)}=\overline\gamma^{(ji)}$, there are in fact $d^n(d^n-1)/2$ undetermined complex-valued variables $\gamma^{(ij)}$.
The main task is to discover enough independent equations from Eq. (\ref{eqsum}) to show that $\gamma^{(ij)}=0$  for all $i,j$. To that end, we further define some notations that will be used in the computation process as in  \cite{PhysRevA.96.010102}.
		
		When $\mathcal{J}=BD$, note that $\text{Tr}_{BD}(\ket{i}\bra{j}_{AB}\otimes\ket{i}\bra{j}_{CD})=\text{Tr}_{B}(\ket{i}\bra{j}_{AB})\otimes \text{Tr}_{D}(\ket{i}\bra{j}_{CD})$.
		Denote
 \begin{equation}\label{qp}
Q^{(ij)}:=\text{Tr}_{B}(\ket{i}\bra{j}_{AB}) \text{ and } P^{(ij)}:=\text{Tr}_{D}(\ket{i}\bra{j}_{CD}).
\end{equation}
Then  $\text{Tr}_{BD}(\ket{i}\bra{j}_{AB}\otimes\ket{i}\bra{j}_{CD})= Q^{(ij)} \otimes P^{(ij)}$. Further,
 \begin{equation}\label{qcong}
\text{Tr}(Q^{(ij)})=\delta_{ij}, Q^{(ij)\dagger}=Q^{(ji)}, \text{ and }\text{Tr}(P^{(ij)})=\delta_{ij}, P^{(ij)\dagger}=P^{(ji)}.
\end{equation}
So Eq. (\ref{eqsum}) can be written as
 \begin{equation}\label{eqre}
\sum_{i<j}\gamma^{(ij)} Q^{(ij)}\otimes P^{(ij)}+\overline\gamma^{(ij)}Q^{(ij)\dagger}\otimes P^{(ij)\dagger}=0
\end{equation}
 Similarly for $\mathcal{J}=AC$, denoting
\begin{equation}\label{lm}
L^{(ij)}:=\text{Tr}_{A}(\ket{i}\bra{j}_{AB}) \text{ and } M^{(ij)}:=\text{Tr}_{C}(\ket{i}\bra{j}_{CD}),\end{equation}
 we have $\text{Tr}_{AC}(\ket{i}\bra{j}_{AB}\otimes\ket{i}\bra{j}_{CD})= L^{(ij)} \otimes M^{(ij)}$, and
 \begin{equation}\label{eqre1}
 \sum_{i<j}\gamma^{(ij)} L^{(ij)}\otimes M^{(ij)}+\overline\gamma^{(ij)}L^{(ij)\dagger}\otimes M^{(ij)\dagger}=0
 \end{equation}

Next, we treat Eqs. (\ref{eqre}) and (\ref{eqre1}) as two sets of equations by considering each entry of the matrices. Let $q^{(ij)}_{ab}$, $p^{(ij)}_{ab}$, $l^{(ij)}_{ab}$ and $m^{(ij)}_{ab}$ be the element in the $(a,b)$-entry of $Q^{(ij)}$, $P^{(ij)}$, $L^{(ij)}$ and $M^{(ij)}$, respectively.
Similar to  \cite{PhysRevA.96.010102}, we need to analyze the dependence among these numbers to ensure that the equations used later in the proof are independent of each other. Thus we need the following lemma.

		
		\begin{lemma}\label{dependence}
			For any $i\neq j$, the only dependence among  the number in $\{q_{ab}^{(ij)},l_{ab}^{(ij)},p_{ab}^{(ij)},m_{ab}^{(ij)}\}_{a,b}$ are $\sum_cq^{(ij)}_{cc}=\sum_cl^{(ij)}_{cc}=\sum_cp^{(ij)}_{cc}=\sum_cm^{(ij)}_{cc}=0$.
		\end{lemma}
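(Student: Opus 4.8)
The plan is to recall that, for a generic pure state $\ket{\psi}$ written in Schmidt form along $(AB|CD)$ as in Eq.~\eqref{eqpsi2N}, the orthonormal Schmidt vectors $\{\ket{i}_{AB}\}$ and $\{\ket{i}_{CD}\}$ are \emph{Haar-random} orthonormal bases of $\mathcal{H}_A\otimes\mathcal{H}_B$ and $\mathcal{H}_C\otimes\mathcal{H}_D$, independently of the coefficients $\lambda_i$. The quantities $Q^{(ij)}=\text{Tr}_B(\ketbra{i}{j}_{AB})$, $L^{(ij)}=\text{Tr}_A(\ketbra{i}{j}_{AB})$, $P^{(ij)}=\text{Tr}_D(\ketbra{i}{j}_{CD})$, $M^{(ij)}=\text{Tr}_C(\ketbra{i}{j}_{CD})$ are then fixed polynomial functions of the entries of these Haar-random vectors. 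The trace identities $\text{Tr}(Q^{(ij)})=\text{Tr}(L^{(ij)})=\text{Tr}(P^{(ij)})=\text{Tr}(M^{(ij)})=\delta_{ij}$, which for $i\ne j$ become $\sum_c q^{(ij)}_{cc}=\sum_c l^{(ij)}_{cc}=\sum_c p^{(ij)}_{cc}=\sum_c m^{(ij)}_{cc}=0$, are automatic; the content of the lemma is that these are the \emph{only} algebraic relations holding identically among the listed numbers for a generic choice of Schmidt bases.

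First I would fix $i\ne j$ and set up the algebraic-geometry framework: regard the entries of $\ket{i}_{AB},\ket{j}_{AB}$ as coordinates on (an open dense subset of) the Stiefel manifold of orthonormal $2$-frames in $\mathbb{C}^{d^2}$, and similarly for the $CD$ side; the map sending these coordinates to the tuple $(q^{(ij)}_{ab},l^{(ij)}_{ab},p^{(ij)}_{ab},m^{(ij)}_{ab})_{a,b}$ is a polynomial (in fact bilinear in conjugate-pairs of the frame entries) map into affine space. ``The only dependence is $\dots$'' then means precisely that the Zariski closure of the image is the affine subvariety cut out exactly by the four trace equations, equivalently that the image has codimension $4$ and is not contained in any further hypersurface. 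Since the $AB$-side data $(q^{(ij)}_{ab},l^{(ij)}_{ab})$ and the $CD$-side data $(p^{(ij)}_{ab},m^{(ij)}_{ab})$ depend on disjoint sets of Haar-random variables, they are statistically/algebraically independent, so it suffices to prove the analogous statement \emph{separately} on each side: for a generic orthonormal pair $\ket{i},\ket{j}\in\mathbb{C}^d\otimes\mathbb{C}^d$, the only relation among $\{q^{(ij)}_{ab},l^{(ij)}_{ab}\}_{a,b}$, where $Q^{(ij)}=\text{Tr}_2(\ketbra{i}{j})$ and $L^{(ij)}=\text{Tr}_1(\ketbra{i}{j})$, is $\sum_c q^{(ij)}_{cc}=\sum_c l^{(ij)}_{cc}=0$ (these two sums being equal, both to $\braket{j}{i}=0$).

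For the one-sided claim I would write $\ket{i}=\sum_{a,b} x_{ab}\ket{a}\ket{b}$, $\ket{j}=\sum_{a,b}y_{ab}\ket{a}\ket{b}$ with $X=(x_{ab})$, $Y=(y_{ab})$ generic matrices subject only to the two orthonormality-type constraints $\text{Tr}(XX^\dagger)=\text{Tr}(YY^\dagger)=1$ and $\text{Tr}(XY^\dagger)=0$ (the norm constraints are irrelevant to polynomial dependence since scaling is available, and $\text{Tr}(XY^\dagger)=0$ is exactly the relation $\sum_c q^{(ij)}_{cc}=0$); then $Q^{(ij)}=X Y^\dagger$ and $L^{(ij)}=X^\dagger Y$ up to transpose/conjugation conventions, whose entries are the bilinear forms $\sum_c x_{ac}\overline{y_{bc}}$ and $\sum_c \overline{x_{ca}}y_{cb}$. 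I would then compute the rank of the Jacobian of the map $(X,Y)\mapsto(XY^\dagger, X^\dagger Y)$ at a generic point and verify it equals (number of target coordinates) $-\,2$ — ranging $X,Y$ over generic matrices, one can even pick a convenient base point (e.g. $X$ a scaled identity plus a small perturbation, $Y$ a generic traceless-against-$X$ matrix) to make the Jacobian computation transparent; dimension counting $\dim\{(X,Y)\} - \dim\{\text{relations already imposed}\}$ against the image dimension then forces the image to be the full codimension-$2$ subvariety, so no extra polynomial vanishes on it. Recombining the two sides, the fibre products structure gives codimension exactly $4$, which is the lemma. The main obstacle I anticipate is the bookkeeping in the Jacobian/dimension count — keeping track of the real-vs-complex structure (the entries are conjugate-linear in one frame), the overlap between the $\ket{i}$ and $\ket{j}$ variables, and ensuring the chosen base point is generic enough that the Jacobian attains its maximal rank; the conceptual content (independence of the two sides, and that a generic fibre of a polynomial map realizes the generic codimension) is routine, but writing it cleanly without grinding through a large Jacobian requires a well-chosen normalization.
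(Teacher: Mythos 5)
Your reduction to one side (the $AB$-data and $CD$-data depend on independently chosen Schmidt bases) matches the paper, but the statement you then set out to prove is strictly stronger than the lemma and is in fact false, so the proposed Jacobian/dimension count cannot succeed. Writing $\ket{i}_{AB},\ket{j}_{AB}$ as coefficient matrices $X,Y$ of size $d^{|A|}\times d^{|B|}$, one has $Q^{(ij)}=XY^{\dagger}$ and $L^{(ij)}=X^{T}\bar{Y}$, and the image of $(X,Y)\mapsto(XY^{\dagger},X^{T}\bar{Y})$ satisfies many polynomial relations beyond the two trace relations. When $|A|\neq|B|$ (e.g.\ the six-qubit case of Lemma 2, where $Q^{(ij)}$ is $4\times 4$), $Q^{(ij)}$ has rank at most $\min(d^{|A|},d^{|B|})$, so all larger minors vanish identically on the image but not on the trace-zero subspace. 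Even in the square case $|A|=|B|$ one has $\det Q^{(ij)}=\det X\,\overline{\det Y}=\det L^{(ij)}$ (indeed, for invertible $X$, $Q^{(ij)}=X\,(L^{(ij)})^{T}X^{-1}$, so all characteristic-polynomial coefficients of $Q^{(ij)}$ and $L^{(ij)}$ agree), and a fibre count shows the image has codimension well above $2$. Hence ``the Zariski closure is cut out exactly by the trace equations'' is wrong, the Jacobian rank you plan to verify is not (target dimension)$-2$, and the recombined ``codimension $4$'' claim fails as well. Separately, even as a plan the decisive computation is only announced, not carried out.

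The lemma is meant (and used in the paper) only as a statement about \emph{linear} dependence: no linear combination of the listed numbers vanishes identically except the ones generated by the four trace relations; this is what guarantees that the homogeneous linear equations in the $\gamma^{(ij)}$ extracted from Eqs.~(\ref{eqf}) and (\ref{eqf1}) are independent. The paper proves this by an elementary ``private monomial'' argument: expanding in the computational basis, each off-diagonal $q^{(ij)}_{ef}$ contains a term $\mu^{(i)}_{eb}\overline{\mu^{(j)}_{fb}}$ that occurs in no other listed quantity and not in the orthonormality constraint $\sum_{a,b}\mu^{(i)}_{ab}\overline{\mu^{(j)}_{ab}}=0$, and similarly for $l^{(ij)}_{ef}$ and for the diagonal entries with the $(1,1)$ entries removed; hence none can be written as a linear combination of the others. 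If you want to salvage your framework, you must restrict attention to linear forms vanishing on the image (e.g.\ show the differential of the map at one convenient point already separates the listed coordinates modulo the trace forms), rather than claiming the image fills the whole trace-zero variety.
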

	\begin{proof} It suffices to show that among $E\triangleq\{q_{ab}^{(ij)},l_{ab}^{(ij)}\}_{a,b}$, the only dependence is $\sum_cq^{(ij)}_{cc}=\sum_cl^{(ij)}_{cc}=0$. If this is true, then similarly the only dependence among $\{p_{ab}^{(ij)},m_{ab}^{(ij)}\}_{a,b}$ is $\sum_cp^{(ij)}_{cc}=\sum_cm^{(ij)}_{cc}=0$. Since $AB$ and $CD$ are disjoint sets of particles, and their  Schmidt  bases are chosen independently and randomly, we conclude the result.

Now we expand the Schmidt bases $\ket{i}_{AB}$ in terms of the computational basis by $\ket{i}_{AB}=\sum_{a,b}\mu^{(i)}_{ab}\ket{ab},$ 
        where the only dependence among the $\mu^{(i)}_{ab}$ is
        \begin{equation}\label{dep}
        	\langle{i}\ket{j}_{AB}=\sum_{a,b}\mu^{(i)}_{ab}\overline\mu^{(j)}_{ab}=\delta_{ij}.
        \end{equation}
         By Eqs. (\ref{qp}) and (\ref{lm}), we can
       express the numbers $q_{ef}^{(ij)}$ and $l_{ef}^{(ij)}$  in terms of the coefficients $\mu$ as follows,
        \begin{equation}\label{qef}
        	q_{ef}^{(ij)}=\sum_b\mu^{(i)}_{eb}\overline\mu^{(j)}_{fb}, ~ l_{ef}^{(ij)}=\sum_{a}\mu^{(i)}_{ae}\overline\mu^{(j)}_{af}.
        \end{equation}
%
By Eq. (\ref{dep}), for $i\neq j$, we have $\sum_cq^{(ij)}_{cc}=\sum_cl^{(ij)}_{cc}=0$ as in Eq. (\ref{qcong}). Next, we show these are the only dependencies among those numbers in $E$.

Consider a number $q_{ef}^{(ij)}$ with $e\neq f$, which contains an item $\mu^{(i)}_{eb}\overline\mu^{(j)}_{fb}$  with $e\neq f$ for arbitrary $b$ by Eq.~(\ref{qef}). Since the item $\mu^{(i)}_{eb}\overline\mu^{(j)}_{fb}$ does not appear in the summation in Eq.~(\ref{dep}), it can not be represented by any other $\mu^{(i)}_{ab}\overline\mu^{(j)}_{a'b'}$.
Further by Eq.~(\ref{qef}), the item $\mu^{(i)}_{eb}\overline\mu^{(j)}_{fb}$  does not appear in the  numbers from $E$ besides $q_{ef}^{(ij)}$, which means we cannot use numbers from $E$ to represent $\mu^{(i)}_{eb}\overline\mu^{(j)}_{fb}$ without $q_{ef}^{(ij)}$. Consequently, $q_{ef}^{(ij)}$ cannot be represented by any other elements in $E$. Similar arguments works for $l_{ef}^{(ij)}$ with $e\neq f$. Thus we obtain that the  numbers in $\{q_{ef}^{(ij)},l_{ef}^{(ij)}\}_{e\neq f}$ are linearly independent,  and they are linearly independent of numbers in $\{q_{cc}^{(ij)},l_{cc}^{(ij)}\}_c$.

		
		
It is left to prove that the only dependence among $\{q_{cc}^{(ij)},l_{cc}^{(ij)}\}_c$ is $\sum_cq^{(ij)}_{cc}=\sum_cl^{(ij)}_{cc}=0$. Let $E_1=\{q_{cc}^{(ij)},l_{cc}^{(ij)}\}_{c>1}$, that is, we exclude the numbers $q^{(ij)}_{11}$ and $l^{(ij)}_{11}$. Then it is equivalent to show that numbers in $E_1$ are independent of each other. Consider the number $q^{(ij)}_{ee}$ with $e>1$, which contains an item $\mu^{(i)}_{e1}\overline\mu^{(j)}_{e1}$. It is easy to check that the item $\mu^{(i)}_{e1}\overline\mu^{(j)}_{e1}$ does not appear in any other number in $E_1$. This means $q^{(ij)}_{ee}$ cannot be represented by other elements in $E_1$, i.e., independent of other numbers in $E_1$. Similarly, we have $l^{(ij)}_{ee}$ is independent of other numbers in $E_1$, which completes the proof.
%
%
%
	\end{proof}
	%
			\subsection{Proof of Theorem~\ref{theorem111}}\label{sec-six}	
       We first prove that Theorem~\ref{theorem111} is true for the six qubit case as an example. For this case, Fig.~\ref{figeven} is reduced to the unique one in Fig.~\ref{fig2}. The following lemma is stated  by treating $\mathcal{H}_1\otimes\mathcal{H}_2$ as $A$, $\mathcal{H}_3$ as $B$, $\mathcal{H}_4$ as $C$ and $\mathcal{H}_5\otimes\mathcal{H}_6$ as $D$. The first part of the proof is a generalization of \cite[Theorem~1]{PhysRevA.96.010102}, and the second part  is an application of Lemma~\ref{dependence}. For self-containment, we give all the details of the proof.
	
\begin{figure}[H]
	\centering
	\includegraphics[width=0.4 \textwidth]{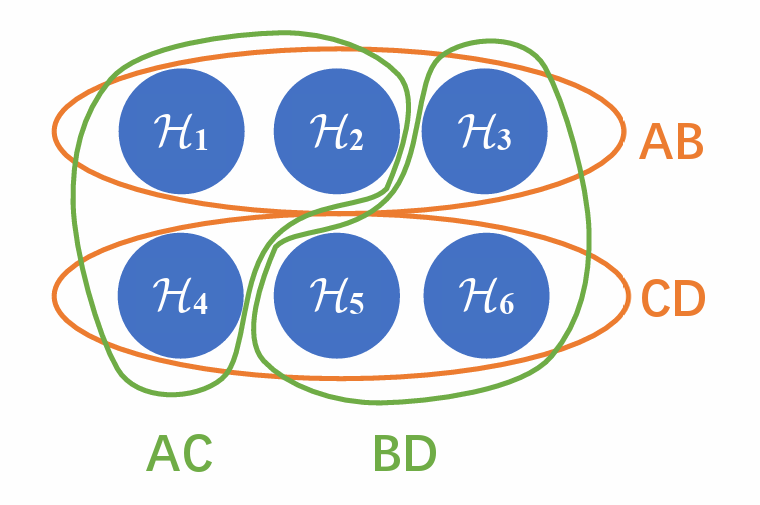}
	\caption{Case of six-qubit: a set of
		four $3$-body marginals that are shown to be sufficient to uniquely
		determine pure generic states.}\label{fig2}
\end{figure}
	
		\begin{lemma}\label{lemma2}
			Almost all six-qubit pure states are UDP by the four $3$-body marginals $\rho_{123}$, $\rho_{456}$, $\rho_{124}$ and $\rho_{356}$.
		\end{lemma}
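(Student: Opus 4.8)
The plan is to specialize the general setup introduced above to $d=2$, $n=3$, with the identification $A=\mathcal{H}_1\mathcal{H}_2$, $B=\mathcal{H}_3$, $C=\mathcal{H}_4$, $D=\mathcal{H}_5\mathcal{H}_6$, so that $\dim A=\dim D=4$ and $\dim B=\dim C=2$, and $\rho_{123}=\rho_{AB}$, $\rho_{456}=\rho_{CD}$, $\rho_{124}=\rho_{AC}$, $\rho_{356}=\rho_{BD}$ (Fig.~\ref{fig2}). First I would write a generic six-qubit $\ket{\psi}$ in its Schmidt form~(\ref{eqpsi2N}) along $(AB|CD)$: genericity guarantees the $2^3=8$ Schmidt coefficients $\lambda_i$ are nonzero and pairwise distinct, so $\rho_{AB}$ and $\rho_{CD}$ each have simple nonzero spectrum, their eigenvectors $\ket{i}_{AB}$, $\ket{i}_{CD}$ are fixed up to phases, and the distinctness of the $\lambda_i$ forces the pairing in the Schmidt decomposition of any competitor. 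Hence any other pure state $\ket{\phi}$ with the same $\rho_{AB}$ and $\rho_{CD}$ must have the form~(\ref{eqphi2N}), and it suffices to prove all the relative phases $\varphi_i$ coincide, equivalently (via~(\ref{eqgamma})) that the $\binom{8}{2}=28$ numbers $\gamma^{(ij)}$ all vanish.

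Next I would feed in the remaining two marginals. Imposing $\rho_{AC}(\psi)=\rho_{AC}(\phi)$ and $\rho_{BD}(\psi)=\rho_{BD}(\phi)$ produces exactly the tensor-product identities~(\ref{eqre}) and~(\ref{eqre1}). Reading them entrywise gives a homogeneous linear system in the $\gamma^{(ij)}$; since each entry equation involves both $\gamma^{(ij)}$ and $\gamma^{(ji)}=\overline{\gamma^{(ij)}}$, it is convenient to regard these as $56$ independent complex variables, so the system becomes $\mathbb{C}$-linear and it will be enough to show its coefficient matrix has rank $56$. The $\mathcal{J}=BD$ identity lives on $A\otimes C$ and supplies $(4\cdot2)^2=64$ scalar equations, the coefficient of $\gamma^{(ij)}$ in slot $((a,c),(a',c'))$ being $q^{(ij)}_{aa'}p^{(ij)}_{cc'}$ with $Q^{(ij)},P^{(ij)}$ as in~(\ref{qp}); the $\mathcal{J}=AC$ identity lives on $B\otimes D$ and supplies another $64$, with coefficient $l^{(ij)}_{bb'}m^{(ij)}_{dd'}$ in slot $((b,d),(b',d'))$, using~(\ref{lm}). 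So the claim reduces to: for generic Schmidt bases, the resulting $128\times 56$ coefficient matrix has full column rank.

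Establishing that full rank is the heart of the matter and the step I expect to be the main obstacle. The idea is to select a $56$-element subset of the $128$ entry-equations and show that the associated $56\times 56$ determinant is not the zero polynomial in the Haar-random coefficients $\mu^{(i)}_{ab}$ of the Schmidt vectors; a nonzero polynomial vanishes only on a measure-zero set, which yields the ``almost all'' conclusion. Here Lemma~\ref{dependence} is the essential input: for each pair $i\neq j$ the only linear relations among $\{q^{(ij)}_{ab},p^{(ij)}_{ab},l^{(ij)}_{ab},m^{(ij)}_{ab}\}$ are the four trace identities, so there is no hidden collapse among the coefficients, and one can order the pairs $(i,j)$ and pick for each of them off-diagonal entries whose coefficient does not reappear in earlier rows, arranging the chosen $56\times 56$ matrix to be block-triangular with manifestly invertible diagonal blocks (or, equivalently, exhibiting a concrete specialization of the $\mu$'s at which the determinant does not vanish). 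It is crucial that we draw entries from both bipartitions: as remarked after Theorem~\ref{theorem111}, the single extra marginal used in~\cite{PhysRevA.96.010102} does not by itself give enough independent equations once $N>4$, and it is precisely the interplay of the $(AB|CD)$ entries (large $Q$, small $P$) with the $(AC|BD)$ entries (small $L$, large $M$) that closes the count.

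Finally, once $\gamma^{(ij)}=0$ for all $i\neq j$, the definition~(\ref{eqgamma}) together with $\lambda_i\lambda_j\neq 0$ forces $e^{\sqrt{-1}(\varphi_i-\varphi_j)}=1$, i.e.\ $\varphi_1=\cdots=\varphi_8$, so $\ket{\phi}$ equals $\ket{\psi}$ up to an irrelevant global phase. Since the generic six-qubit pure states form a full-measure subset of all six-qubit pure states, this shows almost all six-qubit pure states are UDP by $\{\rho_{123},\rho_{456},\rho_{124},\rho_{356}\}$, proving the lemma.
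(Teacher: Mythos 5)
You set the lemma up exactly as the paper does---Schmidt decomposition along $(AB|CD)$, the phase freedom of Eq.~(\ref{eqphi2N}), the variables $\gamma^{(ij)}$ of Eq.~(\ref{eqgamma}), the two entrywise systems (\ref{eqre})--(\ref{eqre1}), Lemma~\ref{dependence}, and a genericity conclusion---so the route is the paper's; the problem is that the step you yourself call the heart is left undone, and your target there is actually stronger than what the paper establishes. The paper does not look for $56$ independent rows: it keeps $\overline{\gamma^{(ij)}}$ tied to $\gamma^{(ij)}$ (so $28$ unknowns), discards the diagonal coefficients $q^{(ij)}_{cc}$, $l^{(ij)}_{cc}$ and the trace-redundant entry in each block, and uses Lemma~\ref{dependence} to certify that the remaining $6\times 3=18$ equations from Eq.~(\ref{eqf}) plus $15$ from Eq.~(\ref{eqf1}) are independent, closing with $33>28$. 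Your reformulation, with $\gamma^{(ij)}$ and $\overline{\gamma^{(ij)}}$ as $56$ independent unknowns, needs a nonvanishing $56\times 56$ minor of the $128\times 56$ matrix, and that is not automatic: for $i\neq j$ each $Q^{(ij)}\otimes P^{(ij)}$ is (traceless)$\otimes$(traceless), so the $64$ rows coming from Eq.~(\ref{eqre}) have rank at most $15\times 3=45$, and likewise for Eq.~(\ref{eqre1}); full column rank $56$ is equivalent to the nontrivial claim that no nonzero $X=\sum_{i\neq j}c_{ij}\,\ketbra{i}{j}_{AB}\otimes\ketbra{i}{j}_{CD}$ satisfies both $\text{Tr}_{BD}X=0$ and $\text{Tr}_{AC}X=0$, for which you offer no argument beyond ``generic''.

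Moreover, the mechanism you sketch for the determinant---order the pairs $(i,j)$ and pick entries ``whose coefficient does not reappear in earlier rows'' so the chosen $56\times 56$ matrix is block-triangular---cannot be realized: every entry equation of (\ref{eqre}) and (\ref{eqre1}) involves every unknown, with generically nonzero coefficients $q^{(ij)}_{aa'}p^{(ij)}_{cc'}$ or $l^{(ij)}_{bb'}m^{(ij)}_{dd'}$, so no selection of rows is triangular in the $\gamma$'s, and ``reappearance'' of coefficients is not the relevant criterion. What has to be ruled out is a linear combination of chosen rows vanishing identically in the Haar-random $\mu$'s; Lemma~\ref{dependence} is a per-pair statement about the entries $q,p,l,m$ (only the trace relations), and it does this job only after the diagonal entries and the trace-completing entries are excluded---which is precisely the paper's selection of $33$ equations. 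So either revert to the paper's bookkeeping ($33$ independent trace-reduced equations against $28$ conjugate-constrained unknowns) or genuinely prove the rank-$56$ statement, e.g.\ by exhibiting a concrete orthonormal-basis specialization where a $56\times 56$ minor is nonzero; as written, this is a real gap, not a routine verification.
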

	
	\begin{proof}
		By adopting the Schmidt decomposition between $AB=\{1, 2, 3\}$ and $CD=\{4, 5, 6\}$, a generic six-qubit pure state can be
		written as
		\begin{equation}
			\ket{\psi}= \sum_{i=1}^{8}\sqrt{\lambda_i}\ket{i}_{\mathcal{H}_1\otimes \mathcal{H}_2\otimes\mathcal{H}_3}\ket{i}_{\mathcal{H}_4\otimes \mathcal{H}_5\otimes\mathcal{H}_6},
		\end{equation}
		where $\sum_i{\lambda_i}=1$. Suppose that there is another pure state $\ket{\phi}$ which exhibits the same three-body marginals as $\ket{\psi}$. Then $\ket{\phi}$ must be of the form
		\begin{equation}
			\ket{\phi}= \sum_{i=1}^{8}e^{\sqrt{-1}\varphi_i}\sqrt{\lambda_i}\ket{i}_{\mathcal{H}_1\otimes \mathcal{H}_2\otimes\mathcal{H}_3}\ket{i}_{\mathcal{H}_4\otimes \mathcal{H}_5\otimes\mathcal{H}_6}.
		\end{equation}
For each $i,j=1,\ldots,8$, let  $\gamma^{(ij)}$ be defined as in Eq. (\ref{eqsum}), where there are indeed $2^3(2^3-1)/2=28$ undetermined complex-valued variables $\gamma^{(ij)}$. Let $Q^{(ij)}$ and $M^{(ij)}$ be $4\times 4$ matrices, $P^{(ij)}$ and $L^{(ij)}$ be $2\times 2$ matrices, as defined in Eqs.  (\ref{qp}) and (\ref{lm}). Considering $Q^{(ij)}\otimes P^{(ij)}$ as a $4\times 4$ block matrix with each block $q_{ab}^{(ij)}P^{(ij)}$, then Eq. (\ref{eqre}) implies that
 \begin{equation}\label{eqf}
\sum_{i<j}\gamma^{(ij)} q^{(ij)}_{ab}P^{(ij)}+\overline\gamma^{(ij)}\overline q^{(ij)}_{ba} P^{(ij)\dagger}=0, \text{ for } a,b=1,\ldots,4.
\end{equation}
For each $1\leq a<b\leq 4$ in Eq. (\ref{eqf}), there are three homogeneous linear equations among the four for each entry due to $\text{Tr}(P^{(ij)})=0$. So there are at least  $6\times 3=18$ homogeneous linear equations from Eq. (\ref{eqf}).

Similarly,  considering $L^{(ij)}\otimes M^{(ij)}$ as a $2\times 2$ block matrix with each block $l_{ab}^{(ij)}M^{(ij)}$, then Eq. (\ref{eqre1}) implies that
 \begin{equation}\label{eqf1}
\sum_{i<j}\gamma^{(ij)} l^{(ij)}_{ab}M^{(ij)}+\overline\gamma^{(ij)}\overline l^{(ij)}_{ba} M^{(ij)\dagger}=0, \text{ for } a,b=1,2.
\end{equation}
For $(a,b)=(1,2)$ in Eq. (\ref{eqf1}), there are $15$ homogeneous linear equations among the $4\times 4=16$ due to $\text{Tr}(M^{(ij)})=0$.

Note that in these $18+15=33$ homogeneous linear equations obtained from Eqs. (\ref{eqf}) and (\ref{eqf1}), we do not involve $q^{(ij)}_{cc}$ and $l^{(ij)}_{cc}$, and we have taken $\text{Tr}(P^{(ij)})=\text{Tr}(M^{(ij)})=0$ into account. So by Lemma~\ref{dependence}, these $33$ homogeneous linear equations are independent for the only $28$ undetermined complex-valued variables $\gamma^{(ij)}$.  Consequently,   $\gamma^{(ij)}=0$ for all $i,j=1,\ldots,8$ and all phases $\varphi_i=\varphi$ must be equal. Thus $\ket{\phi}=e^{\sqrt{-1}\varphi}\ket{\psi}$ which corresponds to the same physical state.
	\end{proof}
	
	Note that there could be more independent linear equations obtained from  Eqs. (\ref{eqf}) and (\ref{eqf1}), but the current selection is sufficient for our proof.

\vspace{0.4cm}

	Next we extend the proof of Lemma~\ref{lemma2} to the case of general even $N$ qudit, that is Theorem~\ref{theorem111}.
		
%
%
%

		\begin{proof} The proof follows the same steps as in  Lemma \ref{lemma2}, so we only indicate the difference. Without loss of generality, we can fix a bipartition $AB|CD$, and let the other bipartition $AC|BD$ be flexible as stated in Theorem~\ref{theorem111}. Using the Schmidt decomposition along the bipartition $AB|CD$, we get $\frac{1}{2}d^{n}(d^{n}-1)$ undetermined complex-valued variables $\gamma^{(ij)}$ for $1 \le i < j \le d^{n}$. The different choices of  a subset $A$ from $AB$ and a subset $C$ from $CD$ will affect the sizes of matrices $Q$, $P$, $L$, and $M$. And the sizes of these matrices will affect the number of independent homogeneous linear equations obtained from Eqs. (\ref{eqf}) and (\ref{eqf1}). In fact, the size of $Q$ is $d^{|A|}\times d^{|A|}$; the size of $P$ is $d^{|C|}\times d^{|C|}$; the size of $L$ is $d^{|B|}\times d^{|B|}$; and the size of $M$ is $d^{|D|}\times d^{|D|}$. Following the proof of  Lemma \ref{lemma2}, Eq. (\ref{eqf}) creates $\binom{d^{|A|}}{2}\times (d^{2|C|}-1)$ linear equations, and Eq. (\ref{eqf1})  creates $\binom{d^{|B|}}{2}\times (d^{2|D|}-1)$ linear equations. By Lemma~\ref{dependence}, these equations are linearly independent. Since $AB|CD$ and $AC|BD$ are both bipartition,   the number of independent equations $\binom{d^{|A|}}{2}\times (d^{2|C|}-1)+\binom{d^{|B|}}{2}\times (d^{2|D|}-1)=\binom{d^{|A|}}{2}\times (d^{2(n-|A|)}-1)+\binom{d^{n-|A|}}{2}\times (d^{2|A|}-1)$, which is a one-variable function. By some computations, we find that this function takes its minimum value at $|A|=n-1$ or $|A|=1$. That is, the most unbalanced case as in Fig.~\ref{figeven} provides the least number of independent equations.
%
			 It can be computed that in this worst case, $\frac{1}{2}d^{n-1}(d^2-1)(d^{n-1}-1)+\frac{1}{2}d(d-1)(d^{2n-2}-1)-\frac{1}{2}d^{n}(d^{n}-1)=\frac{1}{2}(d^{2n}-d^{2n-1}-d^{2n-2}-d^{n+1}+d^{n}+d^{n-1}-d^2+d)$ is greater than zero for any $n \ge 2$ and $d \ge 2$, which implies $\gamma^{(ij)}=0$ in the even $N$-qudit case.
			Thus $\ket{\phi}=e^{\sqrt{-1}\varphi}\ket{\psi}$ which corresponds to the same physical state.
		\end{proof}
		
		
Note that the proof of Theorem~\ref{theorem111} is valid
for generic states only. In the following two sections, we investigate UDP problems for special states.
		
		\section{Lower bounds on number of marginals for UDP}\label{sec:hypergraphs}
		In the previous sections, we showed that almost all even $N$-qudit pure states $\ket{\psi}$ can be UDP by its $\mathcal{D}_{N/2}(\ket{\psi})$. Specifically, only four marginals in $\mathcal{D}_{N/2}(\ket{\psi})$ are required. In this section, we will discuss the low bound of the number of marginals that are necessary to uniquely determine the state $\ket{\psi}$. The \emph{marginal number} for $\ket{\psi}$ to be UDP, denoted by M($\mathcal{D}_{k}(\ket{\psi})$), is the smallest number of $k$-body marginals that can uniquely determine the pure state $\ket{\psi}$ among pure state.

		Now we describe a deck by a hypergraph. Given a deck by ${\mathcal{F}}$, we define a hypergraph $G_{\mathcal{F}}$ as follows: the vertex set is $\mathcal{I}$ and the edge set is $\mathcal{F}$. That is, each particle corresponds to a vertex, and each marginal corresponds to an edge; a particle belongs to a marginal means that the corresponding vertex  belongs to the corresponding edge.
    	A hypergraph is said to be connected if there is a path between every pair of vertices; here a path means a sequence of edges $e_1,\ldots,e_s$ and a sequence of distinct vertices $v_1,\ldots,v_{s+1}$ such that $\{v_1,v_{i+1}\}\subset e_i$ for all $i=1,\ldots,s$. Otherwise, the graph is disconnected.
	    For example, if $\cF=\{\{123\}, \{456\}, \{124\}, \{356 \}\}$ is given in Fig.~\ref{fig2}, then it is connected. Further, if each element in $\mathcal{F}$ has size $k$, then we denote $G_{\mathcal{F}}$ as $k$-$G_{\mathcal{F}}$.
	
	    Next, we show that the UDP of certain pure states has relations with the connectivity of the graph.
	
	
	We call an $N$-particle pure state $\ket{\psi}$ fully product if it can be written as $\ket{\psi}= \ket{\alpha}_{1}\otimes\ket{\beta}_{2}\otimes\cdots\otimes\ket{\gamma}_{{N}}\in \mathcal{H}_{1}\otimes \cdots\otimes \mathcal{H}_{N}\cong (\mathbb{C}^d)^{\otimes N}$.
	Otherwise, it is said to be entangled. A multipartite
	state $\ket{\psi}$ is said to be \emph{genuinely multipartite entangled} (GME), if it cannot be written in a biproduct form, i.e., $\ket{\psi}\neq\ket{\alpha}_{\mathcal J}\otimes\ket{\beta}_{\mathcal J_C}$ for any bipartite cut ${\mathcal J}|{\mathcal J_C}$. We denote $\dim \mathcal J=\dim\otimes_{i\in \cJ} \cH_{i}$.
	
	\begin{theorem}\label{connected}
		For any $N$-particle GME pure state, if the state is $\mathcal{F}$-UDP, then $G_{\mathcal{F}}$ must be connected.
		
	\end{theorem}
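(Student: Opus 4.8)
The plan is to prove the contrapositive: if $G_{\mathcal{F}}$ is disconnected, then no GME pure state of parties $\mathcal{I}$ is $\mathcal{F}$-UDP. So suppose $G_{\mathcal{F}}$ has at least two connected components. Grouping the components into two nonempty classes yields a partition $\mathcal{I}=\mathcal{J}\sqcup\mathcal{J}_C$ with the crucial feature that every edge $K\in\mathcal{F}$ lies entirely inside $\mathcal{J}$ or entirely inside $\mathcal{J}_C$ — an edge meeting both classes would merge their components. Hence every marginal appearing in the deck $\mathcal{D}_{\mathcal{F}}$ is the marginal of a subset contained in $\mathcal{J}$ or in $\mathcal{J}_C$.

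The next step is to exploit the relative-phase freedom across the cut $\mathcal{J}|\mathcal{J}_C$. Let $\ket{\psi}$ be an arbitrary GME pure state. Since $\ket{\psi}$ is GME it is entangled across $\mathcal{J}|\mathcal{J}_C$, so its Schmidt rank $r$ along this cut satisfies $r\ge 2$; write $\ket{\psi}=\sum_{i=1}^{r}\sqrt{\lambda_i}\,\ket{a_i}_{\mathcal{J}}\ket{b_i}_{\mathcal{J}_C}$ with $\lambda_i>0$ and $\{\ket{a_i}\}_i$, $\{\ket{b_i}\}_i$ orthonormal. Put $\ket{\phi}=\sum_{i=1}^{r}e^{\sqrt{-1}\varphi_i}\sqrt{\lambda_i}\,\ket{a_i}_{\mathcal{J}}\ket{b_i}_{\mathcal{J}_C}$ with, for instance, $\varphi_1=0$, $\varphi_2=\pi$ and $\varphi_i=0$ for $i\ge 3$. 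Then $\ket{\phi}$ is a unit vector and $\abs{\braket{\psi}{\phi}}=\abs{1-2\lambda_2}<1$, so $\ket{\phi}$ represents a genuinely different pure state.

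Finally I would verify $\mathcal{D}_{\mathcal{F}}(\ket{\psi})=\mathcal{D}_{\mathcal{F}}(\ket{\phi})$. Since $\{\ket{b_i}\}_i$ is orthonormal, the off-diagonal terms vanish in the partial trace and $\rho_{\mathcal{J}}(\phi)=\sum_i\lambda_i\proj{a_i}_{\mathcal{J}}=\rho_{\mathcal{J}}(\psi)$; by symmetry $\rho_{\mathcal{J}_C}(\phi)=\rho_{\mathcal{J}_C}(\psi)$. Any marginal in the deck is $\rho_{\mathcal{K}}$ for some $\mathcal{K}$ contained in $\mathcal{J}$ or in $\mathcal{J}_C$; if $\mathcal{K}\subseteq\mathcal{J}$ then $\rho_{\mathcal{K}}=\text{Tr}_{\mathcal{J}\setminus\mathcal{K}}(\rho_{\mathcal{J}})$, which is the same for $\psi$ and $\phi$, and the other case is symmetric. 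Thus the two decks coincide and $\ket{\psi}$ is not $\mathcal{F}$-UDP, which proves the contrapositive. The argument is essentially a phase-toggling trick, and the only subtle point is the claim $\ket{\phi}\neq e^{\sqrt{-1}\theta}\ket{\psi}$: this is exactly where GME enters, since it forces the cut induced by the disconnection to have Schmidt rank at least $2$, leaving a nontrivial phase to flip.
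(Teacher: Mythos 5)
Your proposal is correct and follows essentially the same route as the paper's proof: disconnection of $G_{\mathcal{F}}$ yields a bipartition $\mathcal{J}|\mathcal{J}_C$ with every element of $\mathcal{F}$ inside one side, and a relative phase change in the Schmidt decomposition across that cut (nontrivial because GME forces Schmidt rank at least $2$) produces a distinct pure state with the same deck. You merely phrase it as a contrapositive and spell out two details the paper leaves implicit, namely that $\abs{\braket{\psi}{\phi}}<1$ and that every marginal in the deck is a further partial trace of $\rho_{\mathcal{J}}$ or $\rho_{\mathcal{J}_C}$.
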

	\begin{proof}
		We prove it by contradiction. Let $\ket{\psi}$ be an $N$-particle GME pure state, which is $\mathcal{F}$-UDP. Suppose that $G_{\mathcal{F}}$ is disconnected.  We claim that there exists a partition of its vertex set into two nonempty sets $X$ and $Y$ such that there is no edge across $X$ and $Y$. Otherwise, for any two vertices $x$ and $y$, let $X=\{x\}$, we can always find a vertex $x_1$ such that $\{x,x_1\}$ belongs to an edge. Then we let $X_1=\{x, x_1\}$, and by assumption there exists a vertex $x_2$ such that either $\{x,x_2\}$ or $\{x_1,x_2\}$ belongs to an edge. This implies that there is a path between $x$ and $x_2$. Continuing in this way, we can get a path from $x$ and $y$.
		
		Therefore,
		there exists a bipartition ${\mathcal J}|{\mathcal J_C}$ such that each subset in $\mathcal{F}$ is contained either in $\mathcal J$ or $\mathcal J_C$. Let $d=\min(\dim \mathcal J,\dim \mathcal J_C)$. Using the Schmidt decomposition along this bipartition, we can write
		\begin{equation}
			\ket{\psi}= \sum_{i=1}^{d}\sqrt{\lambda_i}\ket{i}_{\mathcal J}\ket{i}_{\mathcal J_C}.
		\end{equation}
		Consequently,
		\begin{equation}
			\rho_{\mathcal J} = \text{Tr}_{\mathcal J_C}\ket{\psi}\bra{\psi}
			=\sum_{i=1}^{d}{\lambda_i}\ket{i}\bra{i}_{\mathcal J},
		\end{equation}
		and
		\begin{equation}
			\rho_{\mathcal J_C} = \text{Tr}_{\mathcal J}\ket{\psi}\bra{\psi}
			=\sum_{i=1}^{d}{\lambda_i}\ket{i}\bra{i}_{\mathcal J_C}. \\
		\end{equation}
		Since $\ket{\psi}$ is a GME pure state, $d$ is at least two. 
		Therefore, we can change the phase of different $i$ in order to get another state $\ket{\phi}$ which has the same deck by $\mathcal{F}$ as $\ket{\psi}$,
		\begin{equation}
			\ket{\phi}= \sum_{i=1}^{d}e^{\sqrt{-1}\varphi_i}\sqrt{\lambda_i}\ket{i}_{\mathcal J}\ket{i}_{\mathcal J_C}.
		\end{equation}
	\end{proof}
	
	Theorem \ref{connected} gives a necessary condition of the deck for a GME pure state to be UDP, which can give a lower bound on the number of marginals for which a GME state can be $k$-UDP. For example, Parashar and Rana in \cite{PhysRevA.80.012319} proved that the generalized $N$-qubit $W$ state $\ket{W}_N$ is UDA (thus UDP) by just ($N-1$) marginals of $\mathcal{D}_{2}(\ket{W}_N)$. We know that the hypergraph corresponding to $\mathcal{D}_{2}(\rho)$ is a simple graph, and a simple graph means that each edge is determined by exactly two vertices. By a simple conclusion in graph theory, we know that the connected graph of $N$ vertices with a minimum number of edges is a tree, which has $(N-1)$ edges. This answers several questions in \cite{PhysRevA.80.012319}.
	 One of the questions posed in \cite{PhysRevA.80.012319} is to ask whether $\ket{W}_N$ can uniquely be determined by arbitrary ($N-1$) 2-body marginals. Wu $\emph{et al.}$ showed that $\ket{W}_N$ can be $\mathcal{F}$-UDP by their ($N-1$) 2-body marginals, if the $G_{\mathcal{F}}$ is a tree \cite{PhysRevA.90.012317}. Since $\ket{W}_N$ is a GME pure state, Theorem \ref{connected} answers this question in a negative way: the graph from the $(N-1)$ 2-body marginals must be connected. This means that if $\ket{W}_N$ can be $\mathcal{F}$-UDP by ($N-1$) 2-body marginals, the $G_{\mathcal{F}}$ must be a tree. Another question  in \cite{PhysRevA.80.012319} is to ask if  $\ket{W}_N$ can be determined by less number of
	2-body marginals. Theorem \ref{connected} also gives a negative answer, ($N-1$) is the optimal number since a connected graph on $N$ vertices has at least ($N-1$) edges. For the general case, we get the following lower bounds.

			\begin{corollary}\label{cor222}
	For any $N$-particle GME pure state $\ket{\phi}$, at least $\lceil \frac{N-1}{k-1}  \rceil$ $k$-body marginals are required for $\ket{\phi}$ to be $k$-UDP. This implies that $M(\mathcal{D}_{k}(\ket{\phi})) \ge \lceil \frac{N-1}{k-1}  \rceil$.
	
	\end{corollary}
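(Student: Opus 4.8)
The plan is to derive Corollary~\ref{cor222} directly from Theorem~\ref{connected} together with an elementary edge-counting bound for connected hypergraphs whose edges have bounded size. Since $\ket{\phi}$ is GME, Theorem~\ref{connected} tells us that whenever $\ket{\phi}$ is $\mathcal{F}$-UDP the hypergraph $G_{\mathcal{F}}$ is connected; if in addition every element of $\mathcal{F}$ has size at most $k$ (in particular if $\mathcal{F}$ is a $k$-deck), then $G_{\mathcal{F}}$ is a connected hypergraph on the $N$ vertices $\mathcal{I}$ all of whose edges have size at most $k$. So it suffices to prove the purely combinatorial statement that such a hypergraph has at least $\lceil\frac{N-1}{k-1}\rceil$ edges.

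To prove this, I would write $\mathcal{F}=\{e_1,\dots,e_m\}$ and order the edges greedily. Choose $e_1$ arbitrarily and set $U_1=e_1$; having chosen $e_1,\dots,e_{i-1}$ with union $U_{i-1}$, if $U_{i-1}\neq\mathcal{I}$ then connectivity of $G_{\mathcal{F}}$ forces some not-yet-chosen edge to meet both $U_{i-1}$ and $\mathcal{I}\setminus U_{i-1}$ (otherwise the partition $(U_{i-1},\mathcal{I}\setminus U_{i-1})$ would split the edge set and disconnect the hypergraph, exactly as in the proof of Theorem~\ref{connected}), and we take $e_i$ to be such an edge; if $U_{i-1}=\mathcal{I}$ we append the remaining edges in any order. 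In either case $e_i$ meets $U_{i-1}$, so $|U_i|\le|U_{i-1}|+(|e_i|-1)\le|U_{i-1}|+(k-1)$, while $|U_1|\le k$. Since $G_{\mathcal{F}}$ is connected and $N\ge 2$ every vertex lies in some edge, so $U_m=\mathcal{I}$, hence $N\le k+(m-1)(k-1)$, which rearranges to $m\ge\frac{N-1}{k-1}$ and therefore $m\ge\lceil\frac{N-1}{k-1}\rceil$ because $m$ is an integer. Applying this with $\mathcal{F}$ any family of $k$-body marginals that is $\mathcal{F}$-UDP for $\ket{\phi}$ yields the first claim, and the inequality $M(\mathcal{D}_k(\ket{\phi}))\ge\lceil\frac{N-1}{k-1}\rceil$ is the same statement specialized to $k$-decks.

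I do not expect a serious obstacle: the entire mechanism is already contained in Theorem~\ref{connected}, and the residual work is the standard fact that a connected hypergraph on $N$ vertices with edges of size at most $k$ needs at least $\lceil(N-1)/(k-1)\rceil$ edges — the hypergraph analogue of ``a connected graph on $N$ vertices has at least $N-1$ edges,'' which for $k=2$ recovers the tree bound used above for $\ket{W}_N$. The only points requiring a little care are the justification of the greedy ordering via the disconnection argument, the remark that permitting edges of size strictly smaller than $k$ only strengthens the bound, and the degenerate cases ($N=1$, where the bound reads $0$, and $k=1$, where no genuine marginal description can be UDP for an entangled state).
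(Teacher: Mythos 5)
Your proof is correct and follows essentially the same route as the paper: Theorem~\ref{connected} reduces the claim to the fact that a connected hypergraph on $N$ vertices with edges of size at most $k$ has at least $\lceil\frac{N-1}{k-1}\rceil$ edges, which both you and the paper establish by ordering the edges so each new edge meets the union of the previous ones and counting the at most $k-1$ new vertices per edge. The only differences are presentational: you argue directly (deriving $m\ge\frac{N-1}{k-1}$ and rounding up) instead of by contradiction, and you explicitly justify the greedy ordering and the degenerate cases, which the paper leaves implicit.
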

\begin{proof}
	Suppose there exist $(\lceil \frac{N-1}{k-1} \rceil-1)$ $k$-body marginals can uniquely determine an $N$-particle GME pure state. Then we have a $k$-deck by $\mathcal{F}$ with $|\mathcal{F}|=(\lceil \frac{N-1}{k-1}  \rceil-1)$ such that $G_{\mathcal{F}}$ is connected.
    However, we claim that this is impossible.  If the graph is connected, there exists an ordering of all edges $e_1,\ldots,e_{|\mathcal{F}|}$ such that each edge $e_j$ intersects with at least one edge $e_i$ with $i<j$. This implies that the union of all edges can not cover all vertices, since
%
	\begin{equation}
		k+(\lceil \frac{N-1}{k-1}  \rceil-2)\times (k-1) = \lceil \frac{N-1}{k-1} \rceil \times (k-1) -(k-2) \le N-1
	\end{equation}
	That's a contradiction.

\end{proof}

		Rana $\emph{et al.}$ proved that the $N$-qubit generic Dicke states $\ket{GD(N,l)}$ are UDP by their $(l+1)$-body marginals, and $\binom{N-1}{l}$ many of them are sufficient \cite{PhysRevA.84.052331}. This implies that M($\mathcal{D}_{l+1}(\ket{GD(N,l)})$) is no more than $\binom{N-1}{l}$. Wu $\emph{et al.}$ showed
		that no more than $\binom{N}{l}/2$ ($l+1$)-body marginals can uniquely determine the $\ket{GD(N,l)}$ \cite{PhysRevA.92.052338}. This means that M($\mathcal{D}_{l+1}(\ket{GD(N,l)})$) is no more than $\binom{N}{l}/2$. Note that  $\binom{N}{l}/2\leq \binom{N-1}{l}$ and the equality holds if and only if $l=N/2$. In other words, these results indicate that the upper bound of M($\mathcal{D}_{l+1}(\ket{GD(N,l)})$) will not exceed $\binom{N}{l}/2$. However, the lower bound obtained by Corollary \ref{cor222} is M($\mathcal{D}_{l+1}(\ket{GD(N,l)})$) $\ge \lceil \frac{N-1}{l}  \rceil$, which is a little bit far from the upper bound.

			\section{Exploration of states that are not UDP}\label{sec:counterexample}
It is  known that almost all $N$ qudit pure states can be UDA by its $\lceil N/2 \rceil$-body marginals \cite{PhysRevA.71.012324,PhysRevLett.89.207901,huang2018quantum}, and UDP by its $\lfloor N/2 \rfloor$-body marginals from Theorem~\ref{theorem111}. In this section, we give examples of pure states that cannot be UDP, and hence cannot be UDA, by their $k$-body marginals for some $k\geq N/2$.


			An  $N$-particle pure state in $(\bbC^{d})^{\otimes N}$
			is called a $k$-uniform state, if all its reductions to $k$ parties are maximally mixed. By Schmidt decomposition,  $k$-uniform states only exist when $k\leq \lfloor N/2\rfloor$. There are many constructions of  $k$-uniform states  \cite{PhysRevA.87.012319},
			so $k$-uniform states  are clearly not UDP by their $\mathcal{D}_{k}(\rho)$. In the following, we find that some $k$-uniform states are not even UDP by their $\mathcal{D}_{N-k}(\rho)$, which has higher-order since $k\leq N/2$.
		 More interestingly,  by modifying these $k$-uniform quantum states, we obtain states that are not $k$-uniform anymore, but still preserve the properties that they cannot be UDP by their $\mathcal{D}_{N-k}(\rho)$.
			
			For example, we  know that the GHZ state, as a $1$-uniform state, cannot be UDP by its $\mathcal{D}_{N-1}(\rho)$ \cite{PhysRevLett.100.050501}. The generalized GHZ state
			\begin{equation}\label{eqghz}
				\alpha\ket{00\cdots0}+\beta\ket{11\cdots1},\text{ where }\alpha,\beta \neq 0,
			\end{equation}
			 is not 1-unform any more when $\alpha \neq \beta$, but still cannot be UDP by its full quantum marginals \cite{PhysRevLett.100.050501}.

            Here is another example. The following is a $2$-uniform states in $(\bbC^3)^{\otimes 4}$,
            \begin{equation}\label{phi}
	              \begin{aligned}
		               \ket{\phi}=&\frac{1}{3}(\ket{0000}+\ket{0111}+\ket{0222}+\ket{1021}+\ket{1102}\\&+\ket{1210}+\ket{2012}+\ket{2120}+\ket{2201}),
	              \end{aligned}
            \end{equation}
            which is clearly not UDP by its $2$-deck.
            We define its generalized state of the form
			\begin{equation}\label{genphi}
				\begin{aligned} \ket{\phi'}=&a_1\ket{0000}+a_2\ket{0111}+a_3\ket{0222}+a_4\ket{1021}\\&+a_5\ket{1102}+a_6\ket{1210}+a_7\ket{2012}\\&+a_8\ket{2120}+a_9\ket{2201},
				\end{aligned}				
			\end{equation}
	         where $a_i\neq0$, $i=1,\ldots,9$.  Note that to any two parties, the reduction of $\ket{\phi'}$ is of the form $\sum_{i_1,i_2}b_{i_1i_2}\ket{i_1i_2}\bra{i_1i_2}$, where $b_{i_1i_2}=a^2_i$ for the unique coefficient $a_i$ whose corresponding basis state reduces to $\ket{i_1i_2}$ on the two parties. So the phase change on any basis state gives another state with the same $2$-deck. For example, the state $\ket{\phi''}$ obtained from $\ket{\phi'}$ by changing $a_1$ to $-a_1$ is another state with the same $2$-deck as $\ket{\phi'}$.
%
%
			Note that this example also illustrates that there exist generic pure states that cannot be determined by half-body marginals.

Next, we extend the above examples into a family of states with $N$-particles that are not UDP by their $(N-k)$-deck with $k\leq N/2$.
			
			In \cite{PhysRevA.97.062326}, Goyeneche $\emph{et al.}$ generalized some classical combinatorial designs into quantum versions, including a generalization of the classical \emph{orthogonal array} (OA) to \emph{quantum orthogonal array} (QOA). In the following we will use QOA to give a class of $N$-qudit states obtained from OA with strength $k$ that cannot be UDP by their $\mathcal{D}_{N-k}$.
			
			An \emph{orthogonal array} with level $d$, strength $k$ and index $\lambda$, denoted as OA$(r, N, d, k)$ with $r=\lambda d^k$, is an $r \times N$ array $A$ with entries from the set $S=\{0, \ldots, d-1 \}$  such that  every $r \times k$ subarray of $A$ contains each $k$-tuple over $S$ exactly $\lambda$ times as a row \cite{Hedayat1999}.
			An OA is called irredundant (IrOA) if every  $r\times (N-k)$
			subarray contains no repeated rows \cite{PhysRevA.90.022316}. Clearly, any OA with index one is irredundant.
			
\begin{example}\label{egoa}The following is  an OA$(9,4,3,2)$ with index $1$,
			\begin{equation}
				\left[\begin{array}{ll}
					0000  \\
					0111  \\
					0222  \\
					1021\\1102\\1210\\2012\\2120\\2201\\
				\end{array}\right].
			\end{equation}
\end{example}			

			
			\begin{definition}\cite{PhysRevA.97.062326}
				A quantum orthogonal array QOA$(r, N, d, k)$ is an arrangement consisting of $r$ rows composed by $N$-particle pure quantum states
				$\ket{\psi_i}\in\mathcal{H}_{1}\otimes \cdots\otimes \mathcal{H}_{N}\cong (\mathbb{C}^d)^{\otimes N}$ if
				\begin{equation}
\sum_{i,j\in[r]}Tr_{A}(\ket{\psi_i}\bra{\psi_j})=\frac{r}{d^k}I_{d^k}
				\end{equation}
				for every subset $A$ of $N - k$ parties.
			\end{definition}
			
			A QOA$(r, N, d, k)$ can be obtained from an IrOA$(r, N, d, k)$ by writing each row $(i_1, i_2, \ldots, i_N)$ as a quantum item $\ket{i_1 i_2 \ldots i_N}$. So a QOA$(9,4,3,2)$ from Example \ref{egoa} is as follows,
\begin{equation}
				\left[\begin{array}{ll}
					\ket{0000}  \\
					\ket{0111}  \\
					\ket{0222}  \\
					\ket{1021}\\
					\ket{1102}\\
					\ket{1210}\\
					\ket{2012}\\
					\ket{2120}\\\ket{2201}\\
				\end{array}\right].
			\end{equation}
By definition, a QOA$(r, N, d, k)$ can give rise to an $N$-particle $k$-uniform state $\ket{\phi}=\frac{1}{\sqrt{r}}\sum_{i\in [r]}\ket{\psi_i}$, with local dimension $d$ and $r$ items \cite{PhysRevA.97.062326}. Note that the $2$-uniform state $\ket{\phi}$ mentioned in Eq.~\eqref{phi} is  obtained
			from the above QOA$(9,4,3,2)$ directly.

			 Similar to the generalized GHZ state in Eq. (\ref{eqghz}), and the generalized state in Eq. (\ref{genphi}), for each quantum state $\ket{\phi}=\frac{1}{\sqrt{r}}\sum_{i\in [r]}\ket{\psi_i}$ corresponding to a QOA$(r, N, d, k)$,  we define its generalized form \[\ket{\phi'}=\sum_{i\in [r]}a_i\ket{\psi_i}\text{ with } a_i \neq 0.\] We call such $\ket{\phi'}$ as the generalized QOA$(r, N, d, k)$ state. 	
		\begin{theorem}\label{QOA}
			The generalized QOA$(r, N, d, k)$ state obtained from an OA$(r, N, d, k)$ with index $1$ cannot be UDP by its $(N-k)$-deck for $k\le \lfloor N/2 \rfloor$.
		\end{theorem}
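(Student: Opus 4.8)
The plan is to mimic the structure of the generalized-GHZ argument (Eq.~\eqref{eqghz}) and the $\ket{\phi''}$ construction in Eq.~\eqref{genphi}, but carried out for a general index-$1$ orthogonal array. Fix an OA$(r,N,d,k)$ with index one and let $\ket{\phi'}=\sum_{i\in[r]}a_i\ket{\psi_i}$ with all $a_i\neq 0$, where $\ket{\psi_i}=\ket{i_1\cdots i_N}$ is the computational basis state read off from the $i$-th row. I would produce a second pure state $\ket{\phi''}=\sum_{i\in[r]}a_i e^{\sqrt{-1}\theta_i}\ket{\psi_i}$, obtained by attaching phases $\theta_i$ that are not all equal, and show that $\ket{\phi''}$ has the same $(N-k)$-deck as $\ket{\phi'}$. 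The simplest choice is to flip the sign of a single amplitude, say $\theta_1=\pi$ and $\theta_i=0$ otherwise, exactly as in the $\ket{\phi''}$ example; this already suffices to witness non-UDP, so the theorem does not require a large phase freedom.

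\textbf{Key steps.} First, for any $(N-k)$-subset $A$ of parties, write $\rho_A=\mathrm{Tr}_{A_C}(\proj{\phi'})$. Expanding, $\rho_A=\sum_{i,j}a_i\overline{a_j}\,\mathrm{Tr}_{A_C}(\ketbra{\psi_i}{\psi_j})$, and since each $\ket{\psi_i}$ is a computational basis vector, $\mathrm{Tr}_{A_C}(\ketbra{\psi_i}{\psi_j})=\delta_{(i)_{A_C},(j)_{A_C}}\,\ketbra{(i)_A}{(j)_A}$, where $(i)_A$ denotes the restriction of row $i$ to the columns in $A$. So the cross term $(i,j)$ survives only when rows $i$ and $j$ agree on all $N-k$ coordinates of $A_C$. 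Second, invoke irredundancy: an index-$1$ OA is irredundant (stated just before Example~\ref{egoa}), which means that in any $r\times(N-k)$ subarray the rows are pairwise distinct; equivalently, rows $i\neq j$ never agree on a set of $N-k$ columns. Hence for $i\neq j$ every cross term vanishes, and $\rho_A=\sum_{i}|a_i|^2\,\proj{(i)_A}$ is diagonal in the computational basis, depending only on the moduli $|a_i|$. Third, the same computation for $\ket{\phi''}$ gives $\rho_A''=\sum_i |a_i e^{\sqrt{-1}\theta_i}|^2\proj{(i)_A}=\sum_i|a_i|^2\proj{(i)_A}=\rho_A$ for every such $A$, so the full $(N-k)$-decks coincide. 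Finally, $\ket{\phi''}$ is genuinely a different pure state from $\ket{\phi'}$ (it is not a global phase multiple, since at least one phase differs while at least one other does not, and all $a_i\neq 0$), so $\ket{\phi'}$ is not UDP by its $(N-k)$-deck. The hypothesis $k\le\lfloor N/2\rfloor$ is what guarantees $N-k\ge k\ge N-k$... more precisely it only needs to be used implicitly: OA$(r,N,d,k)$ with index one requires $k\le N$ trivially, but the irredundancy-to-nontriviality chain needs $r\ge 2$, which holds as $r=d^k\ge 2$; the restriction $k\le\lfloor N/2\rfloor$ is the regime in which this statement is interesting (and is inherited from the $k$-uniform story), so I would simply note it.

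\textbf{Main obstacle.} There is no deep obstacle; the only thing to be careful about is the irredundancy claim. The clean statement ``index $1$ $\Rightarrow$ irredundant'' must be justified: if two distinct rows agreed on $N-k$ columns, then restricting to the complementary $k$ columns, the $k$-tuple formed by those two rows on \emph{some} $k$-subset would be forced to repeat in a way incompatible with each $k$-tuple appearing exactly $\lambda=1$ time — one has to argue that distinctness on \emph{every} $(N-k)$-subset follows, not just on a particular one. The correct argument is: fix any $(N-k)$-subset $A_C$ and any $k$-subset $K\subseteq A$ (possible since $|A|=k$); if rows $i\neq j$ agreed on all of $A_C$, consider the $k$ columns $K$; the rows $i$ and $j$ both complete the $(N-k)$-pattern on $A_C\supseteq$... — actually the slick route is to pick any $k$-subset $B$ with $B\cap A_C$ of size $N-2k\le\ldots$; cleanest is: since $|A_C|=N-k\ge k$ when $k\le N/2$? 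No — we want columns outside $A_C$, of which there are $k$; take $K$ to be those $k$ columns, then agreement of $i,j$ on $A_C$ plus the index-$1$ property applied to the $k$-subset $K$ forces, for the $k$-tuple $(i)_K$, that it appears exactly once, hence $i=j$ unless $(i)_K\neq(j)_K$ — but then rows $i,j$ differ on $K$, so they differ on the $(N-k)$-subarray $A_C\cup(K\setminus\text{one column})$? This bookkeeping is the one genuinely fiddly point, and I would handle it by citing the standard fact \cite{PhysRevA.90.022316} that OAs of index one are irredundant rather than reproving it, exactly as the excerpt already does in the sentence ``Clearly, any OA with index one is irredundant.'' With that in hand the proof is a two-line computation.
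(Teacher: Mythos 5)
Your overall strategy is exactly the paper's: attach relative phases to the generalized QOA state and check that every $(N-k)$-body marginal is diagonal with entries depending only on the $|a_i|^2$, hence phase-insensitive, so the phase-modified state is a distinct pure state with the same $(N-k)$-deck. The construction and conclusion therefore match the published proof.

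However, the justification of the key step --- why the cross terms vanish --- is off as written. You keep the $(N-k)$-subset $A$ and trace out $A_C$, which has $k$ parties; so the cross term $a_i\overline{a_j}\,\mathrm{Tr}_{A_C}(\ketbra{\psi_i}{\psi_j})$ survives precisely when rows $i$ and $j$ agree on the $k$ columns of $A_C$, not on ``all $N-k$ coordinates of $A_C$''. Irredundancy (pairwise distinct rows in every $r\times(N-k)$ subarray) says nothing directly about agreement on a $k$-column set, so invoking it there does not close the step. What you actually need is the immediate consequence of index $1$: every $r\times k$ subarray contains each $k$-tuple exactly once, so two distinct rows can never coincide on any $k$ columns, in particular not on $A_C$; this kills all off-diagonal terms and gives $\rho_A=\sum_i|a_i|^2\proj{(i)_A}$, which is all the phase-insensitivity argument requires. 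Irredundancy plays a different role: it makes the retained patterns $(i)_A$ pairwise distinct, so each diagonal entry is a single $|a_i|^2$ (the paper's ``unique coefficient''), and it is in deducing irredundancy from index $1$ that the hypothesis $k\le\lfloor N/2\rfloor$ enters (one needs $N-k\ge k$ to find a $k$-subset inside the agreeing columns) --- but that refinement is not even needed for phase insensitivity, since sums of $|a_i|^2$ are equally phase-independent. So the lengthy ``main obstacle'' discussion is aimed at the wrong lemma; replace it with the one-line index-$1$ observation and your proof is complete and coincides with the paper's.
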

				\begin{proof} Since the generalized QOA$(r, N, d, k)$ state is obtained from an OA$(r, N, d, k)$ with index $1$, we can write it as
			$\ket{\psi}=\sum_{i=1}^r a_i\ket{i_1i_2 \cdots i_N}$. By the definition of OA, for any $N-k$ parties, the $(N-k)$-body marginal can be expressed as
			\begin{equation}\label{nkbody}
				\sum_{i=1}^r b_{i}\ket{i_1i_2 \cdots i_{N-k}}\bra{i_1i_2 \cdots i_{N-k}},
			\end{equation}
			where $b_{i}$ is  some $a_{i}^2$
			for the unique coefficient $a_i$ whose corresponding basis state reduces to $\ket{i_1i_2 \cdots i_{N-k}}$ on the $N-k$ parties.
			It is easy to check that $\ket{\psi'}=\sum_{i=1}^r e^{\sqrt{-1}\varphi_{i}}{a_{i}}\ket{i_1i_2 \cdots i_N}$ have the same complete $(N-k)$-deck as $\ket{\psi}$, but $\ket{\psi'}$ is a quantum state different from $\ket{\psi}$ if the phases are different.			
		\end{proof}
		
By Theorem~\ref{QOA}, the QOA state $\ket{\phi}=\frac{1}{\sqrt{r}}\sum_{i\in [r]}\ket{\psi_i}$ obtained from an OA$(r, N, d, k)$ with index $1$ is a $k$-uniform state which cannot be UDP by its $(N-k)$-deck, while the generalised QOA state $\ket{\psi}=\sum_{i=1}^r a_i\ket{i_1i_2 \cdots i_N}$ is not $k$-uniform in general, but still cannot be UDP by its $(N-k)$-deck.

In the proof of Theorem~\ref{QOA}, the key point that grantees us to find a different state with the same $(N-k)$-deck is  Eq. (\ref{nkbody}): each $(N-k)$-body reduction has this form with $b_i$ being some $a_i^2$. However, to get each reduction like Eq. (\ref{nkbody}), we don't need each $k$-tuple appearing \emph{exactly} once as required by an OA with index $1$;  we only need each $k$-tuple appearing \emph{at most} once. We call such an array by \emph{packing array}, denoted by  PA$(r, N, d, k)$. Here $2\leq r\leq d^k$, and  $r=d^k$ if it is an OA with index $1$. Then Theorem~\ref{QOA} can be extended as follows with exactly the same proof.

			\begin{theorem}\label{QPA}
				The state $\ket{\psi}=\sum_{i=1}^r a_i\ket{i_1i_2 \cdots i_N}$ with $a_i \neq 0$  obtained from a PA$(r, N, d, k)$ cannot be UDP by its $(N-k)$-deck, where $k\leq \lfloor N/2 \rfloor$.
			\end{theorem}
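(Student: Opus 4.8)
The plan is to mimic the proof of Theorem~\ref{QOA} verbatim, since the only property of an OA with index~$1$ that is actually used there is that every $k$-tuple over $S$ appears \emph{at most once} among the rows when restricted to any $k$ columns. A packing array PA$(r,N,d,k)$ is precisely an $r\times N$ array over $S=\{0,\ldots,d-1\}$ in which every $r\times k$ subarray has each $k$-tuple occurring at most once as a row; an OA$(r,N,d,k)$ with index~$1$ is the special case $r=d^k$, and in general $2\le r\le d^k$. So nothing in the argument breaks.

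First I would write the candidate state as $\ket{\psi}=\sum_{i=1}^r a_i\ket{i_1 i_2\cdots i_N}$ with all $a_i\neq 0$, where row $i$ of the PA is $(i_1,\ldots,i_N)$. Next, fix any $(N-k)$-subset of parties; relabel the columns so that these are the first $N-k$. By the packing property applied to the complementary $k$ columns, two rows of the array that agree on those $k$ columns must be distinct rows, hence (by the defining property of a PA, where each such $k$-tuple appears at most once) no two rows agree on the last $k$ columns — wait, that is the \emph{irredundancy} direction and is not what we need. What we actually need is the statement in Eq.~\eqref{nkbody}: the reduced density matrix on the first $N-k$ parties is diagonal in the computational basis. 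This holds because, when we trace out the last $k$ parties, a cross term $a_i\overline{a_j}\ket{i_1\cdots i_{N-k}}\!\bra{j_1\cdots j_{N-k}}$ survives only if rows $i$ and $j$ agree on the last $k$ columns; since each $k$-tuple appears at most once in those columns, $i=j$, so only diagonal terms remain. Thus
\begin{equation*}
\rho_{\{1,\ldots,N-k\}} \;=\; \sum_{i=1}^r |a_i|^2\,\ket{i_1\cdots i_{N-k}}\!\bra{i_1\cdots i_{N-k}},
\end{equation*}
which depends only on the moduli $|a_i|^2$ and on the array, not on the phases of the $a_i$. (Here $k\le\lfloor N/2\rfloor$ guarantees $N-k\ge k$, so such an $(N-k)$-deck is meaningful and the array data is consistent; more importantly it guarantees that the two rows indexed by a common $k$-tuple genuinely collapse, since the array has at most $d^k$ rows.)

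Finally, I would exhibit the competing state. Choose phases $\varphi_1,\ldots,\varphi_r$ that are not all equal (modulo a global constant) and set $\ket{\psi'}=\sum_{i=1}^r e^{\sqrt{-1}\varphi_i} a_i\ket{i_1\cdots i_N}$. Since $|e^{\sqrt{-1}\varphi_i} a_i|^2 = |a_i|^2$, the displayed formula above is unchanged for every choice of $(N-k)$ parties, so $\mathcal{D}_{N-k}(\ket{\psi'})=\mathcal{D}_{N-k}(\ket{\psi})$; yet $\ket{\psi'}$ is not proportional to $\ket{\psi}$ because not all the phase differences coincide (the $r\ge 2$ basis vectors $\ket{i_1\cdots i_N}$ are distinct and hence linearly independent, so a uniform global phase is the only way two such superpositions can represent the same ray). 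Therefore $\ket{\psi}$ is not $(N-k)$-UDP, proving Theorem~\ref{QPA}.

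There is no real obstacle here: the content is entirely contained in Theorem~\ref{QOA}, and the generalization is the observation that ``index exactly one'' may be relaxed to ``index at most one'' (i.e.\ PA in place of OA with $\lambda=1$) without altering a single line of the argument, apart from noting $2\le r\le d^k$ so that such a PA can exist. The one point deserving a sentence of care is the claim that distinctness of the basis kets $\ket{i_1\cdots i_N}$ forces $\ket{\psi}$ and $\ket{\psi'}$ to be genuinely different states when the phases differ non-trivially; this is immediate because the rows of the array are distinct (even if only on all $N$ columns together), so the $\ket{i_1\cdots i_N}$ are pairwise orthogonal and a multiplicative phase vector is a scalar multiple of the all-ones vector only when it is constant.
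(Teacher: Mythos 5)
Your proposal is correct and follows essentially the same route as the paper: the packing property forces every $(N-k)$-body marginal to be diagonal in the computational basis with weights $|a_i|^2$, so a non-uniform choice of phases $e^{\sqrt{-1}\varphi_i}$ yields a distinct pure state with the identical $(N-k)$-deck, exactly as in Theorem~\ref{QOA}. The brief false start about ``irredundancy'' is harmless, since you immediately give the correct argument that a surviving cross term requires two rows to agree on the traced-out $k$ columns, which the PA property forbids for distinct rows.
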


Note that when $k\leq \lfloor N/2 \rfloor$, Theorem~\ref{QPA} provides a large amount of $N$-qudit states that are not UDP by its $(N-k)$-deck, since  a PA$(r, N, d, k)$ has a very weak structure and commenly exists.

		\section{Conclusion}\label{sec:con}
		In summary, we show that almost all even-qudit pure states can be UDP by its half-body deck, and give a class of states obtained from OA that cannot be UDP.
		Meanwhile, we establish a necessary condition for GME states to be UDP. Interestingly, this leads to some lower bounds of M($\mathcal{D}_{k}(\ket{\psi})$), i.e., the minimum number of $k$-body marginals required to determine the state.
		Based on the results we have obtained, we can see that whether a quantum state can be uniquely determined by its $k$-deck is strongly related to the nature of the quantum state itself. This can also be seen from the long study of the uniqueness problem in QMP. It is known that only GHZ states and their LU equivalence class cannot be UDP by its $k$-deck with $k\leq N-1$ among arbitrary $N$-qudit states, while $W$ states are even UDA  by its $2$-deck. The problem of whether a quantum state can be UDP by its  $k$-deck for $k<\lfloor N/2 \rfloor$ still opens. In addition, an explicit classification of whether quantum states can be uniquely determined by their $k$-deck remains a matter of interest. There are also works on the minimum number of elements in a $k$-deck  by which pure states are UDP or UDA \cite{PhysRevA.80.012319,PhysRevA.92.052338}. For this problem, we give a necessary condition that can be UDP for GME pure states and show some existing results whose decks do reach a minimum size. But for most  quantum states, the lower bound of M($\mathcal{D}_{k}(\ket{\psi})$) is still unknown.
		
		\section{Acknowledgments}
		The authors thank Nikolai Wyderka and Felix Huber for some helpful discussions and suggestions.
		This work is supported by the NSFC
		under Grants No. 12171452 and No. 12231014, the Innovation Program for Quantum Science and Technology
		(2021ZD0302902) and the National Key Research and
		Development Program of China (2020YFA0713100).
		
		\appendix

\bibliographystyle{IEEEtran}
\bibliography{marginal}

\end{document}